\newfont{\mycrnotice}{ptmr8t at 7pt}
\newfont{\myconfname}{ptmri8t at 7pt}
\newcommand{\four}{\mathsf{4}}
\newcommand{\fourk}{\mathsf{4}_k}
\newcommand{\set}[1]{\ensuremath{\left\{#1\right\}}} % set ==> { }
\newcommand{\semt}[2]{\nu_{#2}(#1)}
\newcommand{\semp}[2]{\delta_{#2}(#1)}
\newcommand{\pand}{\mathbin{\wedge_{\rm p}}}
\newcommand{\por}{\mathbin{\vee_{\rm p}}}
\newcommand{\pnot}{\mathbin{\neg}}
\newcommand{\dbd}{\mathbin{\sim}}
\newcommand{\dnot}{\mathop{\neg}}
\newcommand{\dov}{\mathbin{\mathsf{do}}}
\newcommand{\pov}{\mathbin{\mathsf{po}}}
\newcommand{\ooa}{\mathbin{\mathsf{ooa}}}
\newcommand{\ptacl}{PTaCL\xspace}
\newcommand{\ptacle}{PTaCL(E)\xspace}
\newcommand{\maxf}{\phi_{\max}}
\newcommand{\minf}{\phi_{\min}}
\newcommand{\minop}{\curlywedge}
\newcommand{\maxop}{\curlyvee}
\newcommand{\selop}[2]{S_{#1}^{#2}}
\newcommand{\bigmaxop}{\bigcurlyvee}
\newcommand{\bigminop}{\bigcurlywedge}
\newcommand{\jand}{\wedge_{\rm e}}
\newcommand{\tord}{\leqslant_t}
\newcommand{\kord}{\leqslant_k}
\newcommand{\btor}{\mathbin{\vee_{\rm b}}}
\newcommand{\btand}{\mathbin{\wedge_{\rm b}}}
\newcommand{\bkor}{\mathbin{\oplus_{\rm b}}}
\newcommand{\bkand}{\mathbin{\otimes_{\rm b}}}
\newcommand{\bneg}{\mathop{\neg}}
\newcommand{\bnegt}{\mathop{\neg}}
\newcommand{\bnegk}{\mathop{-}}
\newcommand{\btran}[1]{\mathop{\sim_{#1}}}
\newcommand{\bcyc}{\mathop{\diamond}}
\newcommand{\xmin}{\underline{x}}
\newcommand{\xmax}{\overline{x}}
\newcommand{\transp}[2]{\left(#1\, #2\right)}
\newcommand{\mlogicneg}{\mathop{\dagger}}
\newcommand{\mlogiccycle}{\mathop{\diamond}}
\newcommand{\mor}{\vee_t}
\newcommand{\mand}{\mathbin{\wedge_t}}
\newcommand{\mrev}{\mathop{\updownarrow}}
\newcommand{\ptacll}[1]{\text{PTaCL}_{#1}^{\leqslant}\xspace}
\newtheorem{example}{Example}
\newtheorem{theorem}{Theorem}
\newtheorem{remark}{Remark}
\newtheorem{lemma}{Lemma}
\newtheorem{corollary}{Corollary}
\newtheorem{proposition}{Proposition}
\begin{document}

\title{Canonical Completeness in Lattice-Based Languages\\ for Attribute-Based Access Control}
\numberofauthors{2}
\author{
% 1st. author
\alignauthor
Jason Crampton\\
    \affaddr{Royal Holloway University of London}\\
    \affaddr{Egham, TW20 0EX, United Kingdom} \\
    \email{jason.crampton@rhul.ac.uk}
 %2nd. author
\alignauthor
Conrad Williams \\
    \affaddr{Royal Holloway University of London}\\
    \affaddr{Egham, TW20 0EX, United Kingdom} \\
    \email{conrad.williams.2010@live.rhul.ac.uk}
}

%\author{
%  Jason Crampton\\
%  Royal Holloway University of London\\
%  Egham, TW20 0EX, United Kingdom\\
%  \texttt{jason.crampton@rhul.ac.uk}
%  \and
%  Conrad Williams\\
%    Royal Holloway University of London\\
%  Egham, TW20 0EX, United Kingdom\\
%  \texttt{conrad.williams.2010@live.rhul.ac.uk}
%}

\date{}
\maketitle

\begin{abstract}
The study of canonically complete attribute-based access control (ABAC) languages is relatively new. 
A canonically complete language is useful as it is functionally complete and provides a ``normal form'' for policies.
However, previous work on canonically complete ABAC languages requires that the set of authorization decisions is totally ordered, which does not accurately reflect the intuition behind the use of the allow, deny and not-applicable decisions in access control.
A number of recent ABAC languages use a fourth value and the set of authorization decisions is partially ordered.
In this paper, we show how canonical completeness in multi-valued logics can be extended to the case where the set of truth values forms a lattice.
This enables us to investigate the canonical completeness of logics having a partially ordered set of truth values, such as Belnap logic, and show that ABAC languages based on Belnap logic, such as PBel, are not canonically complete.
We then construct a canonically complete four-valued logic using connections between the generators of the symmetric group (defined over the set of decisions) and unary operators in a canonically suitable logic.
Finally, we propose a new authorization language $\ptacll{\four}$, an extension of \ptacl, which incorporates a lattice-ordered decision set and is canonically complete.
We then discuss how the advantages of $\ptacll{\four}$ can be leveraged within the framework of XACML.
\end{abstract}

%\begin{CCSXML}
%<ccs2012>
%<concept>
%<concept_id>10002978.10002991.10002993</concept_id>
%<concept_desc>Security and privacy~Access control</concept_desc>
%<concept_significance>500</concept_significance>
%</concept>
%<concept>
%<concept_id>10002978.10002991.10010839</concept_id>
%<concept_desc>Security and privacy~Authorization</concept_desc>
%<concept_significance>500</concept_significance>
%</concept>
%<concept>
%<concept_id>10002978.10002986.10002988</concept_id>
%<concept_desc>Security and privacy~Security requirements</concept_desc>
%<concept_significance>300</concept_significance>
%</concept>
%<concept>
%<concept_id>10011007.10011006.10011050.10011023</concept_id>
%<concept_desc>Software and its engineering~Specialized application languages</concept_desc>
%<concept_significance>500</concept_significance>
%</concept>
%</ccs2012>
%\end{CCSXML}
%
%\ccsdesc[500]{Security and privacy~Access control}
%\ccsdesc[500]{Security and privacy~Authorization}
%\ccsdesc[300]{Security and privacy~Security requirements}
%\ccsdesc[500]{Software and its engineering~Specialized application languages}
% 
%\printccsdesc

\keywords{XACML; PTaCL; decision operators; combining algorithms; functional completeness; canonical completeness}

 \section{Introduction}\label{sec:intro}

 \emph{Access control} is one of the most important security services in multi-user computer systems, providing a mechanism for constraining the interaction between (authenticated) users and protected resources.
 Generally, access control is implemented by an authorization service, which includes an \emph{authorization decision function} for deciding whether a user request to access a resource (an ``access request'') should be permitted or not.
 In its simplest form an authorization decision function either returns an $\sf{allow}$ or $\sf{deny}$ decision.
 
 Most implementations of access control use \emph{authorization policies}, where a user request to access a resource is evaluated with respect to a policy that defines which requests are authorized.
 Many recent languages for the specification of authorization policies are designed for ``open'', distributed systems (rather than the more traditional ``closed'', centralized systems in which the set of users was assumed to be known in advance).
 Such languages do not necessarily rely on user identities to specify policies; instead, policies are defined in terms of other user and resource attributes. 
 The most widely used attribute-based access control (ABAC) language is XACML~\cite{moses:XACML05,rissanen:XACML12}.
 However, XACML suffers from poorly defined and counterintuitive semantics~\cite{li:access09,ni:dalg09}, and is inconsistent in its articulation of policy evaluation.
 \ptacl is a more formal language for specifying authorization policies~\cite{crampton:ptacl12}, providing a concise syntax for policy targets and precise semantics for policy evaluation.
 
 Crampton and Williams~\cite{crampton:decision16} recently introduced the notion of \emph{canonical completeness} for ABAC languages, showing that XACML and \ptacl are not canonically complete and developing a variant of \ptacl that is canonically complete.
 These results apply to languages that support three decision values, which are assumed to be totally ordered.
 However, there are certain situations where it is useful to have four decisions available, and some languages, such as PBel~\cite{bruns:belnap11}, BelLog~\cite{tsankov:dece14} and Rumpole~\cite{marinovic:rumpole14}, use four decisions, which are partially ordered.
  
 In this paper, we extend existing results to languages that support four decision values, which need not be totally ordered.
 We show that PBel~\cite{bruns:belnap11}, perhaps the best-known four-valued ABAC language, is not canonically complete.
 We then develop a canonically complete ABAC language, based on \ptacl syntax and semantics.
 The language is abstract, but its operators could be implemented as combining algorithms in XACML, thereby leveraging the features that XACML provides for specifying attribute-based requests and targets, the evaluation of targets with respect to requests, and the storage and evaluation of policies.

In Section~\ref{sec:background} we discuss background material and related work, which provides us with the primary motivation for this paper: to develop a canonically complete $4$-valued logic to support a tree-structured authorization language.
The main contributions of this work are:
\begin{itemize}
 \item to extend Jobe's work on canonical completeness in multi-valued logics to the case where the set of truth values forms a lattice (Section~\ref{sec:canonical-completeness-in-lattice-based-logic});
 \item to establish that existing $4$-valued logics are not canonically complete (Section~\ref{sec:canon-complete-belnap});
 \item to construct a canonically complete $4$-valued logic (Section~\ref{sec:canon-suitable-set});
 \item to construct a $4$-valued, canonically complete authorization language for ABAC (Section~\ref{sec:canonically-complete-abac-lang-4val}).
\end{itemize}

We conclude the paper with a summary of our contributions and a discussion of future work.
\section{Background and Related Work}
\label{sec:background}
 In this section, we summarize background material and related work, including tree-structured ABAC languages, canonical completeness, and four-valued languages for ABAC, thereby providing motivation for the work in the remainder of the paper.

\subsection{Completeness in Multi-valued Logics}\label{sec:completeness-in-mvalued-logics}
 
%      
%     
%     
% Crampton and Williams~\cite{crampton:decision16} introduce theoretical foundations, based on results of Jobe~\cite{jobe:functional62}, for developing ABAC authorization languages that are canonically complete, and admit a normal form for policies.
% These results, which we summarise here, apply to logics in which the truth values are totally ordered.
% In Section~\ref{sec:canonical-completeness-lattices} we extend these concepts to logics where the set of logical values forms a lattice.

Let $V$ be a set of truth values.
The set of formulae $\Phi(L)$ that can be written in a (multi-valued) propositional logic $L = (V,{\sf Ops})$ is defined by $V$ and the set of operators $\sf Ops$.  
For brevity, we will write $L$ when $V$ and $\sf Ops$ are obvious from context.

Let $V$ be a totally ordered set of $m$ truth values, $\set{0,\dots,m-1}$, with $0 < 1 < \dots  < m-1$.
Then we say $L = (V,{\sf Ops})$ is \emph{canonically suitable} if and only if there exist two formulas $\maxf$ and $\minf$ of arity $2$ in $\Phi(L)$ such that $\maxf(x,y)$ returns $\max\set{x,y}$ and $\minf(x,y)$ returns $\min\set{x,y}$.
We will usually write $\maxf$ and $\minf$ using the infix operators $\maxop$ and $\minop$ respectively.

\begin{example}
Standard propositional logic with truth values $0$ and $1$, and operators $\vee$ and $\neg$, representing disjunction and negation, respectively, is canonically suitable: 
$\maxf(x,y)$ is simply \mbox{$x \vee y$}, while $\minf(x,y)$ is $\neg(\neg x \vee \neg y)$ (that is, conjunction).
\end{example}

A function $f: V^n \rightarrow V$ is completely specified by a truth table containing $n$ columns and $m^n$ rows. 
However, not every truth table can be represented by a formula in a given logic $L = (V,{\sf Ops})$.
$L$ is said to be \emph{functionally complete} if for every function $f: V^n \rightarrow V$, there is a formula $\phi \in \Phi(L)$ of arity $n$ whose evaluation corresponds to the truth table.
In Section~\ref{sec:tree-structured-languages-for-abac}, we explain why we may regard a tree-structured authorization language as a logic defined by a set of decisions and the set of policy-combining operators.
In this sense, XACML is not functionally complete~\cite{crampton:decision16}, while \ptacl~\cite{crampton:ptacl12} and PBel are~\cite{bruns:belnap11}.

A \emph{selection operator} $\selop{(a_1,\dots,a_n)}{j}$ is an $n$-ary operator defined as follows:
\[
\selop{(a_1,\dots,a_n)}{j}(x_1,\dots,x_n) = 
\begin{cases}
j & \text{if  $(x_1,\dots,x_n) = (a_1,\dots,a_n)$}, \\
0 & \text{otherwise}.
\end{cases}
\]
We will write ${\bf a}$ to denote the tuple $(a_1,\dots,a_n) \in V^n$ when no confusion can occur.
Note that $\selop{{\bf a}}{0}$ is the same for all ${\bf a} \in V^n$, and $\selop{{\bf a}}{0}({\bf x}) = 0$ for all ${\bf x} \in V^n$.
Illustrative examples of binary selection operators (for a $4$-valued logic) are shown in Figure~\ref{fig:ex-selection-operators}.

\begin{figure}[h]
 \[
  \begin{array}{r|@{~}r@{~}r@{~}r@{~}r}
     \selop{(0,2)}{1} & 0 & 1 & 2 & 3 \\
   \hline
    0 & 0 & 0 & 1 & 0 \\
    1 & 0 & 0 & 0 & 0 \\
    2 & 0 & 0 & 0 & 0 \\
    3 & 0 & 0 & 0 & 0 \\
  \end{array}
  \qquad
  \begin{array}{r|@{~}r@{~}r@{~}r@{~}r}
     \selop{(1,1)}{2} & 0 & 1 & 2 & 3 \\
   \hline
    0 & 0 & 0 & 0 & 0 \\
    1 & 0 & 2 & 0 & 0 \\
    2 & 0 & 0 & 0 & 0 \\
    3 & 0 & 0 & 0 & 0 \\
  \end{array}
  \qquad
  \begin{array}{r|@{~}r@{~}r@{~}r@{~}r}
     \selop{(3,0)}{3} & 0 & 1 & 2 & 3 \\
   \hline
    0 & 0 & 0 & 0 & 0 \\
    1 & 0 & 0 & 0 & 0 \\
    2 & 0 & 0 & 0 & 0 \\
    3 & 3 & 0 & 0 & 0 \\
  \end{array}
 \]
 \caption{Selection operators $\selop{(0,2)}{1}$, $\selop{(1,1)}{2}$ and $\selop{(3,0)}{3}$ }\label{fig:ex-selection-operators}
\end{figure}

Selection operators play a central role in the development of canonically complete logics because an arbitrary function $f : V^n \rightarrow V$ can be expressed in terms of selection operators.
Consider, for example, the function 
\[
 f(x,y) = %
  \begin{cases}
   1 & \text{if $x = 0$, $y = 2$}, \\
   2 & \text{if $x = y = 1$}, \\
   3 & \text{if $x = 3$, $y = 0$}, \\
   0 & \text{otherwise}.
  \end{cases}
\]
Then it is easy to confirm that 
\[
 f(x,y) \equiv \selop{(0,2)}{1}(x,y) \maxop \selop{(1,1)}{2}(x,y) \maxop \selop{(3,0)}{3}(x,y).
\]
Moreover, $\selop{(a,b)}{c}(x,y) \equiv \selop{a}{c}(x) \minop \selop{b}{c}(y)$ for any $a,b,c,x,y \in V$.
Thus, 
% \begin{align}
\[
 f(x,y) \equiv (\selop{0}{1}(x) \minop \selop{2}{1}(y)) \maxop (\selop{1}{2}(x) \minop \selop{1}{2}(y)) \maxop (\selop{3}{3}(x) \minop \selop{0}{3}(y)) 
\]
% 
% 	&\equiv (\selop{0}{1} \minop \selop{2}{1}) \maxop \selop{1}{2} \maxop (\selop{3}{3} \minop \selop{0}{3}) \label{eq:normal-form}
% \end{align}
In other words, we can express $f$ as the ``disjunction'' ($\maxop$) of ``conjunctions'' ($\minop$) of unary selection operators.

More generally, given the truth table of function $f : V^n \rightarrow V$, we can write down an equivalent function in terms of selection operators.
Specifically, let 
\[ 
  A = \set{{\bf a} \in V^n : f({\bf a}) > 0};
\] 
then, for all ${\bf x} \in V^n$,
\[
  f({\bf x}) = \bigmaxop_{{\bf a} \in A} \selop{\bf a}{f({\bf x})}(\bf x).
\]
% Informally, the formula on the right hand side takes the maximum value for $\bf x$ from the tables representing the selection operators, each of which has been chosen specifically to produce the correct value for each combination of inputs.
% 
% The above example suggests how the construction can be generalized for $n$-ary functions.
% Indeed, 
Jobe established a number of results connecting the functional completeness of a logic with the unary selection operators, summarized in the following theorem.

% A logic is \emph{complete with respect to the selection operators} (or simply \emph{s-complete}) if, for every integer $n$, every $n$-ary selection operator is equivalent to some formula in the logic.

\begin{theorem}[Jobe~\protect{\cite[Theorems 1, 2; Lemma 1]{jobe:functional62}}]\label{thm:jobe-functional-completeness}
 A logic $L$ is functionally complete if and only if each unary selection operator is equivalent to some formula in $L$.
\end{theorem}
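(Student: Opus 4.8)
The plan is to prove both implications of the biconditional. For the forward direction, suppose $L = (V, {\sf Ops})$ is functionally complete. Each unary selection operator $\selop{a}{j}$ is just a particular function $V \to V$ (a special case of a function $V^n \to V$ with $n = 1$), so by the definition of functional completeness there is a formula in $\Phi(L)$ of arity $1$ realizing its truth table. This direction is essentially immediate from the definitions.

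The substantive direction is the converse: assume every unary selection operator $\selop{a}{j}$ is equivalent to some formula in $\Phi(L)$, and show $L$ is functionally complete. The idea is to use the normal-form decomposition already displayed in the excerpt. First I would observe that taking $f$ to be the binary operator $\maxop$ itself (or $\minop$), the hypothesis combined with the decomposition $f({\bf x}) = \bigmaxop_{{\bf a} \in A} \selop{{\bf a}}{f({\bf x})}({\bf x})$ needs $\maxop$ and $\minop$ to already be available — so the first real step is to argue that $\maxop$ and $\minop$ are themselves expressible from the unary selection operators, i.e. that the hypothesis forces $L$ to be canonically suitable. Concretely, $\maxop$ and $\minop$ are functions $V^2 \to V$; each can be written as a join (via $\bigmaxop$) of binary selection operators $\selop{(a,b)}{c}$, each of which equals $\selop{a}{c}(x) \minop \selop{b}{c}(y)$ — but this last identity itself uses $\minop$, so one must be careful about circularity. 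The cleaner route is to recall (from the totally ordered setting that Jobe works in) that for a totally ordered $V$ the operators $\maxop$ and $\minop$ can be built directly: $\maxop$ is definable once one has a formula that is monotone and "absorbing" at $m-1$, and in fact $\selop{a}{j}$ for suitable choices lets one reconstruct the order. I would cite Jobe's Lemma~1 for exactly this step rather than re-deriving it.

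Granting that $\maxop$ and $\minop$ (equivalently, $\maxf$ and $\minf$) are expressible in $L$, the rest is the normal-form argument spelled out in the excerpt. Given an arbitrary $f : V^n \to V$, set $A = \set{{\bf a} \in V^n : f({\bf a}) > 0}$. For each ${\bf a} = (a_1,\dots,a_n) \in A$, the binary-to-unary identity $\selop{(a_1,\dots,a_n)}{c}({\bf x}) \equiv \selop{a_1}{c}(x_1) \minop \dots \minop \selop{a_n}{c}(x_n)$ expresses each multivariate selection operator as a $\minop$-combination of unary ones, each of which is in $\Phi(L)$ by hypothesis; then $f({\bf x}) \equiv \bigmaxop_{{\bf a} \in A} \selop{{\bf a}}{f({\bf a})}({\bf x})$ expresses $f$ as a $\maxop$-combination of these. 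Since $\maxop$, $\minop$, and all the unary $\selop{a}{j}$ are realized by formulas of $L$, substituting yields a single formula $\phi \in \Phi(L)$ of arity $n$ with the truth table of $f$. One should check the degenerate case $A = \emptyset$ (i.e. $f \equiv 0$) separately — the empty $\bigmaxop$ should be read as the constant $0$, which is $\selop{{\bf a}}{0}$ and hence available.

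The main obstacle is the bootstrapping issue flagged above: making the argument that $\maxop$ and $\minop$ are expressible from the unary selection operators without assuming them in advance. This is the genuine content of Jobe's Lemma~1 in the totally ordered case, and I would lean on that citation; the remainder is a routine substitution argument using identities already displayed in the excerpt.
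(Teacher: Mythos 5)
The paper does not actually prove this statement --- it imports it from Jobe with a citation --- but the surrounding prose sketches exactly the decomposition you use, so your overall architecture is the intended one: the forward direction is immediate from the definitions, and the converse substitutes unary selection operators into $f({\bf x}) \equiv \bigmaxop_{{\bf a}\in A}\bigminop_{i} \selop{a_i}{f({\bf a})}(x_i)$. That part of your argument, including the degenerate case $A=\emptyset$, is fine.

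The genuine problem is your proposed resolution of the bootstrapping issue that you yourself flag. You suggest that the hypothesis ``every unary selection operator is expressible'' forces $L$ to be canonically suitable, i.e.\ that $\maxop$ and $\minop$ can be built from the unary selection operators. This cannot work: any composition of unary operators is again unary, so nothing built from the unary selection operators alone can yield a formula of arity $2$ that genuinely depends on both arguments. Concretely, take $V=\set{0,1}$ with ${\sf Ops}$ consisting of negation and the constant $0$: then $\selop{1}{1}$ is the identity, $\selop{0}{1}$ is negation, and $\selop{a}{0}$ is the constant $0$, so every unary selection operator is expressible, yet disjunction is not and the logic is not functionally complete. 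The theorem is therefore false as a statement about arbitrary logics; it must be read with canonical suitability as a standing hypothesis on $L$ (which is how Jobe frames it, and why the paper defines canonical suitability immediately before stating the theorem), so that $\maxop$ and $\minop$ are available by assumption rather than derived. With that amendment your converse direction goes through; without it, the ``first real step'' of your argument is not merely unproved but impossible.
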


% 
% The proofs of Jobe's results are by induction and constructive.
% Informally, if each unary selection operator is equivalent to some formula in $L$, then we can construct a formula (in $L$) for any selection operator; and if we can construct a formula for any selection  operator, then we can construct a formula for an arbitrary truth table.

The \emph{normal form} of formula $\phi$ in a canonically suitable logic is a formula $\phi'$ that has the same truth table as $\phi$ and has the following properties:
 \begin{itemize}
  \item the only binary operators it contains are $\maxop$ and $\minop$;
  %\item it may contain arbitrary unary operators (defined in terms of the unary operators of the logic);
  \item no binary operator is included in the scope of a unary operator;
  \item no instance of $\maxop$ occurs in the scope of the $\minop$ operator.
 \end{itemize}
In other words, given a canonically suitable logic $L$ containing unary operators $\sharp_1,\dots,\sharp_\ell$, a formula in normal form has the form
\[
 \bigmaxop_{i=1}^r \bigminop_{j=1}^{s} \sharp_{i,j} x_{i,j}
\]
where $\sharp_{i,j}$ is a unary operator defined by composing the unary operators in $\sharp_1,\dots,\sharp_\ell$.
In the usual $2$-valued propositional logic with a single unary operator (negation) this corresponds to disjunctive normal form.

A canonically suitable logic is \emph{canonically complete} if every unary selection operator can be expressed in normal form.
It is known that there are canonically suitable $3$-valued logics that are:%
\begin{inparaenum}[(i)]
 \item not functionally complete~\mbox{\cite{jobe:functional62,luka:phil30}};
 \item functionally complete but not canonically complete~\cite[Theorem 4]{jobe:functional62}; and 
 \item canonically complete (and hence functionally complete)~\cite[Theorem 6]{jobe:functional62}.
\end{inparaenum}

Jobe defined a canonically complete $3$-valued logic~\cite{jobe:functional62}.
The operators and the construction of the unary selection operators using these operators are given in Appendix~\ref{app:jobe-3-val}.
The expression for $f$ above could be expressed in normal form, providing we could find suitable unary operators for a $4$-valued logic.
In Section~\ref{sec:unary-ops-totally-ordered-logics} we explain how to produce a suitable set of unary operators for an $m$-valued logic.
    
  \subsection{Tree-structured Languages for ABAC}\label{sec:tree-structured-languages-for-abac}
  
  Let $D$ be a set of \emph{authorization decisions}.
  Typically, we assume $D$ contains the values $0$, $1$ and $\bot$ representing ``deny'', ``allow'' and ``not-applicable'', respectively.
  We call $0$ and $1$ \emph{conclusive} decisions.
  Let $\oplus$ be an associative binary operator defined on $D$ and $-$ be a unary operator defined on $D$.
  Then
   \begin{itemize}
    \item an \emph{atomic policy} is a pair $(t,d)$, where $d$ is a decision in $D$ and $t$ is a \emph{target predicate};
    \item an atomic policy is a \emph{policy};
    \item if $p$ and $p'$ are policies, then $(t,p \oplus p')$ and $(t,-p)$ are policies.
   \end{itemize}
  We will write $p$ to denote the policy $(\mathsf{true},p)$.
  
  The first stage in policy evaluation for a request $q$ is to determine whether a policy is ``applicable'' to $q$ or not.
  Every (well-formed) request $q$, allows us to assign a truth value to the target $t$.
  Specifically, $t$ may evaluate to {\sf true}, in which case the associated policy is \emph{applicable}; otherwise the policy is \emph{not applicable}.%
   \footnote{The evaluation of requests is not relevant to the exposition of this paper. XACML provides a means of specifying requests and targets and an evaluation architecture for determining whether a target is applicable or not.}
  Then, writing $\semt{t}{q}$ to denote the truth value assigned to $t$ by $q$ and $\semp{p}{q}$ to denote the decision assigned to policy $p$ for request $q$, we define:
  \begin{align*}
   \semp{t,p}{q} &= %
    \begin{cases}
     \semp{p}{q} & \text{if $\semt{t}{q} = 1$}, \\
     \bot & \text{otherwise};
    \end{cases} \\
   \semp{d}{q} &= d; \\
   \semp{-p}{q} &= -\semp{p}{q}; \\
   \semp{p \oplus p'}{q} &= \semp{p}{q} \oplus \semp{p'}{q}.
  \end{align*}
  
  It is easy to see that we may represent a policy as a tree.
  Hence, we describe policy languages of this nature as \emph{tree-structured}.
  The first stage in policy evaluation corresponds to labeling the nodes of tree applicable or not applicable.
  We then compute a decision for non-leaf nodes in the tree by combining the decisions assigned to their respective children.
  Figure~\ref{fig:policy-tree-example} shows the tree for the policy
  \[
  \big{(}t_6,(t_4,-(t_3,((t_1,1) \oplus_1 (t_2,0)))) \oplus_2 (t_5,0)\big{)}
  \]
  and the evaluation of that policy for a request $q$ such that $\nu_q(t_i) = {\sf true}$ for all $i$ except $i = 2$.
  
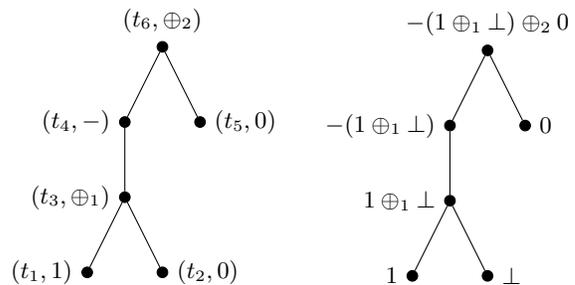
\begin{figure}[!ht]\centering
\begin{tikzpicture}[v/.style={circle,draw,fill=black,inner sep=0pt,minimum width=4pt},w/.style={fill=white}]
   	 \node[v,label=right:\textbf{$(t_2,0)$}] at (0,0) (a) {};
   	 \node[v,label=left:\textbf{$(t_1,1)$}] at (-1,0) (b) {};
   	 \node[v,label=left:\textbf{$(t_3,\oplus_1)$}] at (-0.5,1) (c) {};
   	 \node[v,label=left:\textbf{$(t_4,-)$}] at (-0.5,2) (d) {};
   	 \node[v,label=right:\textbf{$(t_5,0)$}] at (0.5,2) (e) {};
   	 \node[v,label=above:\textbf{$(t_6,\oplus_2)$}] at (0,3) (f) {};
   	 \draw (a) -- (c);
   	 \draw (b) -- (c);
	 \draw (c) -- (d);
	 \draw (d) -- (f);
	 \draw (e) -- (f);
    \end{tikzpicture}
    \quad
\begin{tikzpicture}[v/.style={circle,draw,fill=black,inner sep=0pt,minimum width=4pt},w/.style={fill=white}]
   	 \node[v,label=right:\textbf{$\bot$}] at (0,0) (a) {};
   	 \node[v,label=left:\textbf{$1$}] at (-1,0) (b) {};
   	 \node[v,label=left:\textbf{$1 \oplus_1 \bot$}] at (-0.5,1) (c) {};
   	 \node[v,label=left:\textbf{$-(1 \oplus_1 \bot)$}] at (-0.5,2) (d) {};
   	 \node[v,label=right:\textbf{$0$}] at (0.5,2) (e) {};
   	 \node[v,label=above:\textbf{$-(1 \oplus_1 \bot) \oplus_2 0$}] at (0,3) (f) {};
   	 \draw (a) -- (c);
   	 \draw (b) -- (c);
	 \draw (c) -- (d);
	 \draw (d) -- (f);
	 \draw (e) -- (f);
    \end{tikzpicture}
\caption{A policy tree and its evaluation}\label{fig:policy-tree-example}
\end{figure}

  There are several tree-structured ABAC languages in the literature, including the OASIS standard XACML, PBel and PTaCL~\cite{rissanen:XACML12,crampton:ptacl12,bruns:belnap11}.%
   \footnote{A number of policy algebras have also been defined, which have some similarities with tree-structured languages.  The semantics of a policy are defined in terms of sets of authorized and denied requests~\cite{BoViSa02,WiJa03,ni:dalg09,roa:fia09}, and policy operators are defined in terms of set operations such as intersection and union.}
  These languages differ to some extent in the choices of $D$ and the set of operators that are used.
  XACML, for example, defines several rule- and policy-combining algorithms (which may be regarded as binary operators), but no unary operators.%
   \footnote{An XACML rule is equivalent to an atomic policy.}
  PBel and PTaCL prefer to define a rather small set of operators: PTaCL defines a single binary operator and two unary operators, whereas PBel defines two binary operators and a single unary operator.
  XACML and PTaCL use a three-valued decision set comprising $0$, $1$ and $\bot$, to which PBel adds $\top$, which represents ``conflict''.

  The main difference between existing languages, however, is the extent to which they are complete in the senses defined in Section~\ref{sec:completeness-in-mvalued-logics}~\cite{crampton:decision16}.
  We summarize these differences in Table~\ref{tbl:properties-abac-languages}, where CS, FC and CC denote canonically suitable, functionally complete and canonically complete, respectively.
  In Section~\ref{sec:canon-complete-belnap}, we prove that PBel is canonically suitable but not canonically complete.

  \begin{table}[ht]\setlength{\extrarowheight}{1pt}\centering%\newlength{\mywidth}\settowidth{\mywidth}{\textbf{Operators}}
   \begin{tabular}{l|rrr}
    \bf Language & \bf Decisions & \bf Unary Ops & \bf Binary Ops \\
    \hline
    XACML & $\set{0,1,\bot}$ & $0$ & $12$ \\
    PTaCL & $\set{0,1,\bot}$ & $2$ & $1$ \\
    PTaCL(E) & $\set{0,1,\bot}$ & $2$ & $1$ \\
    PBel & $\set{0,1,\bot,\top}$ & $1$ & $2$ \\
   \end{tabular}

   \vspace*{\baselineskip}%\settowidth{\mywidth}{\textbf{Functionally}}
   
   \begin{tabular}{l|rrr}
    \bf Language & \bf CS? & \bf FC? & \bf CC? \\
    \hline
    XACML & No & No & No \\
    PTaCL & Yes & Yes & No \\
    PTaCL(E) & Yes & Yes & Yes \\
    PBel & ? & Yes & ? \\
   \end{tabular}
   \caption{Properties of ABAC languages}\label{tbl:properties-abac-languages}
  \end{table}
  
  \subsection{The Value of Canonical Completeness}

  One of the main difficulties with using a tree-structured language is writing the desired policy using the operators provided by the language.
  In particular, if it is not possible to express a policy using a single target and decision, the policy author must engineer the desired policy by combining sub-policies using the set of operators specified in the given language.
  This is a non-trivial task, in general. %they are required to chain together the specified combining algorithms to achieve the desired behaviour for their policy.
  Moreover, in XACML it may be impossible to write the desired policy due to its functional incompleteness.
  Thus, a policy author may be forced to write a policy that approximates the desired policy, which may lead to unintended or undesirable decisions for certain requests.
  
  An alternative approach, supported by XACML, is to define custom combining algorithms. %, but this approach has numerous drawbacks.
%   The policy author has to wait for the required combining algorithm to be designed, implemented and tested before it is available for use.
  However, there is no guarantee that the addition of a new combining algorithm will make XACML functionally complete.
  Thus, more and more custom algorithms may be required over time.
  This, in turn, will make the design decisions faced by policy authors ever more complicated, thereby increasing the chances of errors and misconfigurations.
%   due to the limited number of combining algorithms that can be constructed from the XACML combining algorithms~\cite{crampton:decision16}, the number of custom combining algorithms needed can be large.
  
  In other words, we believe it is preferable to define a small number of operators having unambiguous semantics and providing functional completeness.
  A functionally complete ABAC language, such as \ptacl, can be used to construct any conceivable policy using the operators provided by the language.
  However, policy authors still face the challenge of finding the correct way to combine sub-policies using those operators to construct the desired policy.

 For example, \ptacl defines three policy operators $\pand, \neg$ and $\sim$.
 To express XACML's deny- and permit-overrides in \ptacl requires significant effort.
 For convenience, we introduce the operator $\por$:
%  First, a new operator $\por$ needs to be defined:
\[
 d \por d' \stackrel{\rm def}{=} \neg((\neg d) \pand (\neg d')).
\]
It is then possible to show that
\begin{align*}
 d \pov d' &\equiv (d \por (\dbd d')) \pand ((\dbd d) \por d'),\ \text{and} \\
 d \dov d' &\equiv \neg((\neg d) \pov (\neg d')).
\end{align*}
The operators $\pov$ and $\dov$ are equivalent to the permit- and allow-overrides policy-combining algorithms in XACML.
As can be seen, the definitions of these operators in terms of the \ptacl operators are complex, and, more generally, it is a non-trivial task to derive such formulae. 
 
  Disjunctive normal form in propositional logic makes it trivial to write down a logical formula, using only conjunction, disjunction and negation, that is equivalent to an arbitrary Boolean function expressed in the form of a truth table.
  Similarly, a canonically complete ABAC language, such as \ptacle~\cite{crampton:decision16}, makes it possible to write down a policy in normal form from its decision table.
  In this paper, we show that there exist $4$-valued canonically complete logics in which the set of truth values forms a lattice.
  We discuss why and how this can simplify policy generation in Section~\ref{sec:ptacl4-operators-policies}.

  In addition, policies in normal form may be more efficient to evaluate. 
  Given a formula in a 3-valued logic expressed in normal form, any literal that evaluates to 0 causes the entire clause to evaluate to 0, while any clause evaluating to 1 means the entire formula evaluates to 1.
  In short, the time required for policy evaluation may be reduced in many cases. 
  (This is similar to the way in which algorithms such as first-applicable in XACML work: once an applicable policy is found policy evaluation terminates, even if there are additional policies that could be evaluated.)
  
  \subsection{The Value of a Fourth Decision}\label{sec:the-value-of-the-fourth-value}
  
The XACML 2.0 standard includes a fourth authorization decision ``indeterminate''~\cite{moses:XACML05}.
This is used to indicate errors have occurred during policy evaluation, meaning that a decision could not be reached.
The XACML 3.0 standard extends the definition of the indeterminate decision to indicate decisions that might have been reached, had evaluation been possible~\cite{rissanen:XACML12}.
% This indeterminate decision is meant to capture errors and attempt to handle them during policy evaluation.
However, the indeterminate decision is used in XACML 3.0 for more than reporting errors.
It is also used as a decision in the ``only-one-applicable'' combining algorithm, which returns indeterminate if two or more sub-policies are applicable.

More generally, a conflict decision is used in PBel (and languages such as Rumpole and BelLog) to indicate that two sub-policies return different conclusive decisions.
PBel is functionally complete.
In the remainder of this paper, we show that PBel is not canonically complete and then develop a canonically complete $4$-valued ABAC language.

\section{Lattice-based Multi-valued\\ Logics}\label{sec:canonical-completeness-lattices}

\renewcommand{\xmin}{\underline{v}}
\renewcommand{\xmax}{\overline{v}}

We first recall the definition of a lattice.
Suppose $(X,\leqslant)$ is a partially ordered set.
Then for a subset $Y$ of $X$, we say $u$ is an \emph{upper bound} of $Y$ if $y \leqslant u$ for all $y \in Y$.
We say $u'$ is a \emph{least upper bound} or \emph{supremum} of $Y$ if $u' \leqslant u$ for all upper bounds $u$ of $Y$.
Note that a least upper bound of $Y$ (if it exists) is unique.
We define \emph{greatest lower bound} or \emph{infimum} in an analogous way.
A lattice $(X,\leqslant)$ is a partially ordered set such that for all $x,y \in X$ there exists a least upper bound of $x$ and $y$, denoted $\sup\set{x,y}$, and a greatest lower bound of $x$ and $y$, denoted by $\inf\set{x,y}$.
The least upper bound of $x$ and $y$ is written as $x \vee y$ (the ``join'' of $x$ and $y$) and the greatest lower bound is written as $x \wedge y$ (the ``meet'' of $x$ and $y$).
If $(X,\leqslant)$ is a finite lattice, as we will assume henceforth, then $(X,\leqslant)$ has a maximum element (that is, a unique maximal element) and a minimum element.

In the remainder of this section we%
 \begin{inparaenum}[(i)]
   \item describe Belnap logic~\cite{belnap:logic77}, a well-known $4$-valued lattice-based logic;
   \item extend the definitions of canonical suitability, selection operators and canonical completeness to lattices; and
   \item show that Belnap logic and PBel are not canonically complete.
  \end{inparaenum}

\subsection{Belnap Logic}

\renewcommand{\four}{\mathsf{4}}

Belnap logic was developed with the intention of defining ways to handle inconsistent and incomplete information in a formal  manner.
It uses the truth values $0$, $1$, $\bot$, and $\top$, representing ``false'', ``true'', ``lack of information'' and ``too much information'', respectively.  
In the remainder of this paper, we will denote the four valued decision set $\set{\bot,0,1,\top}$ by $\four$.

% We show that while Belnap logic is canonically suitable, the best known sets of functionally complete operators that are not canonically complete.
% In particular, we prove that two sets of operators commonly used in the literature are not canonically complete.

The truth values $0$, $1$, $\bot$ and $\top$ have an intuitive interpretation in the context of access control: $0$ and $1$ are interpreted as the standard ``deny'' and ``allow'' decisions, $\bot$ is interpreted as ``not-applicable'' and $\top$ represents a conflict of decisions.
PBel is a $4$-valued tree-structured ABAC language~\cite{bruns:belnap11} based on Belnap logic.
% It follows that PBel is not canonically complete.

The set of truth values in Belnap logic admits two orderings: a truth ordering $\tord$ and a knowledge ordering $\kord$.
In the truth ordering, $0$ is the minimum element and $1$ is the maximum element, while $\bot$ and $\top$ are incomparable indeterminate values.
In the knowledge ordering, $\bot$ is the minimum element, $\top$ is the maximum element while $0$ and $1$ are incomparable.
Both $(\four,\tord)$ and $(\four,\kord)$ are lattices, forming the interlaced bilattice illustrated in Figure~\ref{fig:four-structure}.

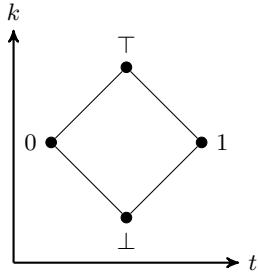
\begin{figure}[h]
\centering
 \begin{tikzpicture}[v/.style={circle,draw,fill=black,inner sep=0pt,minimum width=4pt},w/.style={fill=white}, axis/.style={ thick, ->, >=stealth'}]
     \draw[axis] (-1.5,-0.6)  -- (1.5,-0.6) node(xline)[right]
        {$t$};
    \draw[axis] (-1.5,-0.6) -- (-1.5,2.5) node(yline)[above] {$k$};
	 \node[v,label=below:\textbf{$\bot$}] at (0,0) (a) {};
	 \node[v,label=left:\textbf{$0$}] at (-1,1) (b) {};
	 \node[v,label=right:\textbf{$1$}] at (1,1) (c) {};
	 \node[v,label=above:\textbf{$\top$}] at (0,2) (d) {};
   	 \draw (a) -- (c);
   	 \draw (a) -- (b);
   	 \draw (b) -- (d);
   	 \draw (c) -- (d);
    \end{tikzpicture}
 \caption{The $4$ truth values in Belnap logic}\label{fig:four-structure}
\end{figure}

We write the meet and join in $(\four,\tord)$ as $\btand$ and $\btor$, respectively; and the meet and join in $(\four,\kord)$ as $\bkand$ and $\bkor$, respectively.
(We use the subscript $\rm b$ to differentiate the Belnap operators from the \ptacl operators $\pand$ and $\por$.)

% The binary decision tables of five binary operators and one unary operator defined over $\four$ are shown in Figure~\ref{fig:belnap-operators}.
% The unary operator has the effect of reversing the values $0$ and $1$, leaving the values of $\bot$ and $\top$ unchanged.
% In other words, it acts like negation.
We may interpret values in $V$ as operators of arity $0$ (that is, constants).
Then it is known that $L(\four,\set{\neg, \btand, \btor, \bkand, \bkor, \supset_{\rm b}, \bot,0,1,\top})$ is functionally complete~\cite[Theorem 12]{arieli:four98} and that $\set{\neg,\bkor,\supset_{\rm b},\bot}$ is a minimal functionally complete set of operators~\cite[Proposition 17]{arieli:four98}.
The truth tables for the binary operators $\btand$, $\btor$, $\bkand$, $\bkor$ and $\supset_{\rm b}$ are shown in Figure~\ref{fig:belnap-operators} (in Appendix~\ref{app:belnap-ops}).
The unary operator $\neg$ has the effect of switching the values $0$ and $1$, leaving $\bot$ and $\top$ fixed; in other words, it acts like ``classical'' negation.

\subsection{Canonical Completeness}\label{sec:canonical-completeness-in-lattice-based-logic}

Jobe's definition of canonical suitability for multi-valued logics assumes a total ordering on the set of truth values.
Given that Belnap logic~\cite{arieli:four98}, on which PBel is based, is a $4$-valued logic in which the set of truth values forms a lattice, we seek to extend the definition of canonical suitability to lattice-based logics.

Let $L$ be a logic associated with a lattice $(V,\leqslant)$ of truth values.
Then $L$ is \emph{canonically suitable} if and only if there exist in $L$ two formulas $\maxf$ and $\minf$ of arity $2$ such that $\maxf(x,y)$ returns sup$\set{x,y}$ and $\minf(x,y)$ returns inf$\set{x,y}$.
If a logic is canonically suitable, we will write $\maxf(x,y)$ and $\minf(x,y)$  using infix binary operators as $x \maxop y$ and $x  \minop y$, respectively.

\begin{remark}
The existence of $\sup\set{x,y}$ and $\inf \set{x,y}$ is guaranteed in a lattice; this is not true in general for partially ordered sets.
And for a totally ordered (finite) set,  $\sup\set{x,y} = \max\set{x,y}$ and $\inf\set{x,y} = \min\set{x,y}$, so our definitions are compatible with those of Jobe's for totally ordered sets of truth values.
\end{remark}

% Jobe defined selection operators only for sets of totally ordered truth values.
We now extend the definition of selection operators to a lattice-based logic.
% When defining selection operators, the minimum truth value is placed in all rows except row $i$, where a chosen truth value $j$ is placed.
% For both total and lattice orderings of truth values (of finite size), the existence of a minimum truth value is guaranteed.
% 
Let $L$ be a logic associated with a lattice $(V,\leqslant)$ of truth values, with minimum truth value $\xmin$. % (in total-ordered logics $\xmin = 0$).
Then, for ${\bf a} \in V^n$, the $n$-ary \emph{selection operator} $\selop{{\bf a}}{j}$ is defined as follows:
\[
\selop{{\bf a}}{j}({\bf x}) = 
\begin{cases}
j & \text{if ${\bf x} = {\bf a}$}, \\
\xmin & \text{otherwise.}
\end{cases}
\]
Note $\selop{{\bf a}}{\xmin}(x) = \xmin$ for all ${\bf a},{\bf x} \in V^n$.

The definitions for normal form and canonically completeness for lattice-based logics are identical to total-ordered logics.
Nevertheless, we reiterate the definitions here in the interests of clarity. 
% (Note only that the $\maxop$ and $\minop$ binary operators permitted in normal form are the operators which return sup$\set{x,y}$ and $\minf(x,y)$ for lattice-ordered logics.)
% 
The \emph{normal form} of formula $\phi$ in a canonically suitable logic is a formula $\phi'$ that has the same truth table as $\phi$ and has the following properties:
 \begin{itemize}
  \item the only binary operators it contains are $\maxop$ and $\minop$;
  %\item it may contain arbitrary unary operators (defined in terms of the unary operators of the logic);
  \item no binary operator is included in the scope of a unary operator;
  \item no instance of $\maxop$ occurs in the scope of the $\minop$ operator.
 \end{itemize}
% In other words, given a canonically suitable logic $L$ containing unary operators $\sharp_1,\dots,\sharp_\ell$, a formula in normal form has the form
% \[
%  \bigmaxop_{i=1}^r \bigminop_{j=1}^{s} \sharp_{i,j} x_{i,j}
% \]
% where $\sharp_{i,j}$ is a unary operator defined by composing the unary operators in $\sharp_1,\dots,\sharp_\ell$.
% In the usual $2$-valued propositional logic with a single unary operator (negation) this corresponds to disjunctive normal form.
% 
A canonically suitable logic is \emph{canonically complete} if every unary selection operator can be expressed in normal form.

\subsection{Completeness of Belnap Logic and PBel}\label{sec:canon-complete-belnap}
Having extended the definitions of canonical suitability, selection operators and canonical completeness to lattices, we now investigate how these concepts can be applied to Belnap logic~\cite{belnap:logic77}.
The meet and join operators of the two lattices $(\four,\tord)$ and $(\four,\kord)$ defined in Belnap logic are different.
Canonical suitability for a $4$-valued logic, defined as it is in terms of the ordering on the set of truth values, will thus depend on the ordering we choose on $\four$.
Consequently, the $\minop$ and $\maxop$ operators, along with the selection operators, will differ depending on the lattice that we choose.

In Section~\ref{sec:canonically-complete-abac-lang-4val}, we will argue in more detail for the use of a lattice-based ordering on $\four$ to support a tree-structured ABAC language.
For now, we state that we will use knowledge-ordered lattice $(\four,\kord)$.%
 \footnote{It is worth noting that results analogous to those presented in this paper can be obtained for the truth ordering.
	   Results for a total ordering on $\four$ can be derived using existing methods~\cite{crampton:decision16,jobe:functional62}.}
The intuition is that the minimum value in this lattice is $\bot$ (rather than $0$ in $(\four,\tord)$) and that this value should be the default value for a policy (being returned when the policy is not applicable to a request).
In the interests of brevity, we will henceforth write $\fourk$, rather than $(\four,\kord)$.

It follows from the functional completeness of $\set{\pnot,\bkor,\supset_{\rm b},\bot}$ that $L(\fourk,\set{\pnot,\bkor,\supset_{\rm b},\bot})$ is canonically suitable.
A similar argument applies to $\set{\neg,\btand,\supset_{\rm b},\bot,\top}$, the set of operators used in PBel. 

As $\bot$ is the minimum truth value in the lattice $\fourk$, the $n$-ary selection operator $\selop{\bf a}{j}$ for $\fourk$ is defined by the following function:
\[
 \selop{{\bf a}}{j}({\bf x}) = 
  \begin{cases}
   j & \quad \text{if } {\bf x} = {\bf a}, \\
   \bot & \quad \text{otherwise.}
  \end{cases}
\]
Examples of selection operators are shown in Figure~\ref{fig:selection-operators}.

\begin{figure}[h]
 \[
  \begin{array}{c|c|c}
   x & \selop{0}{0} & \selop{1}{\top} \\
  \hline
  \bot & \bot & \bot \\ 
  0 & 0 & \bot \\
  1 & \bot &  \top\\
  \top & \bot &  \bot\\
  \end{array}
  \qquad
  \begin{array}{c|cccc}
     \selop{(1,\top)}{0} & \bot & 0 & 1 & \top \\
   \hline
    \bot & \bot & \bot & \bot & \bot \\
    0 & \bot & \bot & \bot & \bot \\
    1 & \bot & \bot & \bot & 0 \\
    \top & \bot & \bot & \bot & \bot \\
  \end{array}
 \]
 \caption{Examples of selection operators in a logic based on $\fourk$}\label{fig:selection-operators}
\end{figure}

Functional completeness also implies all unary selection operators can be expressed as formulas in the logics $L(\fourk,\set{\pnot,\bkor,\supset_{\rm b},\bot})$ and $L(\fourk,\set{\neg,\btand,\supset_{\rm b},\bot,\top})$.
However, we have the following result, from which it follows that neither of these logics is canonically complete.

 \begin{proposition}\label{pro:belnap-normal-form}
   $L(\fourk,\set{\neg, \btand, \btor, \bkand, \bkor, \supset_{\rm b}, \bot,0,1,\top})$ is not canonically complete.
 \end{proposition}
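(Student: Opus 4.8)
The plan is to show that no unary operator in this logic can "move information downward" in the knowledge ordering in the way a normal form would require, and to exploit this to block some particular unary selection operator. Concretely, I would first analyze what unary operators are definable in $L(\fourk,\set{\neg, \btand, \btor, \bkand, \bkor, \supset_{\rm b}, \bot,0,1,\top})$. The key structural observation is that all of $\btand, \btor, \bkand, \bkor$ are monotone with respect to the knowledge ordering $\kord$ (this is exactly the interlaced-bilattice property), and $\supset_{\rm b}$ and $\neg$ have restricted behaviour — in particular $\neg$ only swaps $0$ and $1$ and fixes $\bot$ and $\top$. I would look for an invariant preserved by all the operators. A natural candidate: consider the property "$\phi(\bot) = \bot$ whenever $\phi$ is built without using the constants $0,1,\top$", but since constants are allowed we need an invariant that survives them. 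The right invariant is likely a closure property under the knowledge-ordering automorphism or under $\neg$: every definable unary term $t(x)$ satisfies $\neg t(\neg x) = t(x)$ is too strong, but something like "$t$ restricted to $\set{0,1}$ is either constant or a bijection onto a $\neg$-symmetric set" should hold, propagated inductively through the connectives.

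The main step is then to pick a unary selection operator that violates whatever invariant I isolate, and argue a normal-form expression $\bigmaxop_{i} \bigminop_{j} \sharp_{i,j} x$ for it is impossible. A normal form in a single variable collapses: it is $\bigmaxop_i \bigminop_j \sharp_{i,j}(x)$ where each $\sharp_{i,j}$ is a composite unary operator, so the whole expression is $\bkor$ of $\bkand$'s of definable unary operators applied to $x$. Since $\bkand$ and $\bkor$ are the lattice meet and join in $\fourk$, the value of the normal form at input $x$ is $\sup_i \inf_j \sharp_{i,j}(x)$, all computed in $\fourk$. So the output, as a function of $x$, lies in the sublattice of functions generated under pointwise $\inf,\sup$ by the definable unary operators. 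I would target, say, $\selop{0}{\top}$ (sending $0 \mapsto \top$ and everything else to $\bot$): to realize this we would need definable unary operators whose $\inf$/$\sup$ combination outputs $\top$ only at $0$. But $\top$ is the \emph{maximum} of $\fourk$, so $\inf_j \sharp_{i,j}(0) = \top$ forces $\sharp_{i,j}(0) = \top$ for all $j$ in that clause; and I would show (via the inductive analysis) that any definable unary operator sending $0 \mapsto \top$ must also send $1 \mapsto \top$ (by the $\neg$-symmetry: $t(1) = t(\neg 0) = \neg t(0) = \neg\top = \top$, if the symmetry $t(\neg x)=\neg t(x)$ holds for constant-free terms — and for terms with constants one handles the finitely many constant subterms directly). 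Hence that clause also outputs $\top$ at $1$, so the $\sup$ over clauses is $\geq \top$ at $1$, i.e. equals $\top \neq \bot$, contradiction.

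The hard part — and where I would spend most of the effort — is establishing the precise invariant on definable unary operators and checking it is closed under substitution into all six connectives in the presence of the constant operators $0,1,\top$. The constants are what make this delicate: $\neg$-symmetry of unary terms fails once $0$ or $1$ appears as a subterm (e.g. the constant term $0$ is not $\neg$-symmetric). The fix is to argue about normal-form clauses directly rather than arbitrary terms: in a clause $\bigminop_j \sharp_{i,j}(x)$, each $\sharp_{i,j}$ is a \emph{composition} of the basic unary operators, of which only $\neg$ exists (the constants are nullary, not unary), so every $\sharp_{i,j}$ is either the identity or $\neg$ or a power thereof — hence $\sharp_{i,j} \in \set{\mathrm{id},\neg}$. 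This drastically limits the normal form: it is $\sup_i \inf_j a_{i,j}$ where each $a_{i,j} \in \set{x,\neg x}$, a function determined by which of $x,\neg x$ appear. One then simply enumerates: such a function is symmetric under $x \mapsto \neg x$ composed with the value-swap, which no selection operator $\selop{a}{j}$ with $a \in \set{0,1}$ and $j \notin \set{\bot\}$ can be, since $\selop{0}{\top}(0)=\top$ but $\selop{0}{\top}(1)=\bot$ while $\neg\selop{0}{\top}(\neg 0) = \neg\top = \top \neq \bot$. This enumeration is short and finite, so once the reduction "$\sharp_{i,j}\in\set{\mathrm{id},\neg}$" is in place the argument closes cleanly; I expect the author's proof to follow essentially this route.
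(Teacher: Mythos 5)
Your final paragraph contains a correct proof, but it is a genuinely different argument from the paper's, and your first two paragraphs are detours you ultimately (and rightly) abandon. The shared starting point is the reduction you reach at the end: in a normal form the only admissible binary operators are $\bkand$ and $\bkor$, and since $\neg$ is the sole unary connective and $\neg\neg = \mathrm{id}$, every composite unary operator $\sharp_{i,j}$ is either the identity or $\neg$. From there the paper takes the shortest possible route: $\neg\bot = \bot$ and $\bot \bkand \bot = \bot \bkor \bot = \bot$, so every normal-form formula in one variable fixes $\bot$, and hence no $\selop{\bot}{d}$ with $d \neq \bot$ is expressible. You instead observe that any normal form is a $\bkor$-of-$\bkand$'s of literals from $\set{x,\neg x}$ and, because $\neg$ is an order automorphism of $\fourk$ (it swaps the incomparable elements $0$ and $1$ and fixes $\bot$ and $\top$, hence commutes with $\bkand$ and $\bkor$), every such formula $t$ satisfies $t(\neg x) = \neg t(x)$; the operator $\selop{0}{\top}$ violates this since $\selop{0}{\top}(1) = \bot \neq \top = \neg\selop{0}{\top}(0)$. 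Both are invariant arguments of the same strength and both rely on the same implicit convention that constants do not occur as leaves of a normal form; the paper's fixed-point invariant is marginally cheaper because it needs no claim about $\neg$ being a lattice homomorphism, whereas your equivariance invariant is slightly more informative (it also rules out, e.g., $\selop{1}{0}$, which the paper's invariant does not touch). Your opening speculation about monotonicity and about invariants on arbitrary definable unary terms in the full logic is not needed and, as you note yourself, would founder on the constants; it would be better omitted.
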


 \begin{proof}
 It is impossible to represent all unary selection operators in normal form. %, providing the result.
  The statement follows from the following observations:%
  \begin{inparaenum}[(i)]
   \item Belnap logic defines one unary operator $\pnot$;
%    \item A logic in normal form may contain an arbitrary number of unary operators (defined in terms of the unary operators of the logic);
%    \item no further unary operators can be defined as $\pnot$ is the only unary operator;
   \item the only binary operators that may be used in normal form are $\bkor$ ($\minop$) and $\bkand$ ($\maxop$); and
   \item for any operator $\oplus \in \set{\pnot, \bkand,\bkor}$ we have $\bot \oplus \bot = \bot$.
  \end{inparaenum}
  Thus it is impossible to construct a unary operator of the form $\selop{\bot}{d}$ for any $d \ne \bot$.
 \end{proof} 

%To understand this proof from a more intuitive perspective, consider the decision tables in Figure~\ref{fig:constructing-normal-form}.
%It is immediately clear there is no method for changing a $\bot$ (or $\top$) decision into either 0 or 1, and hence no way to construct the unary selection operators.
%
%\begin{figure}[ht]
%
% \[
%  \begin{array}{c|c|c|c}
%   ~d~ & ~\pnot d~ & d \minop d & d \maxop d \\
%  \hline
%   \bot & \bot & \bot & \bot \\
%   0 & 1 & 0 & 0 \\
%   1 & 0 & 1 & 1 \\
%   \top & \top & \top & \top \\
%  \end{array}
% \]
%
% \caption{Operators for constructing a normal form}\label{fig:constructing-normal-form}
%\end{figure}

\begin{corollary}
PBel is not a canonically complete authorization language.
\end{corollary}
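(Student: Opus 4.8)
The plan is to observe that the corollary follows from essentially the same argument as Proposition~\ref{pro:belnap-normal-form}, specialized to the operator set actually used by PBel. First I would recall that PBel, being functionally complete over $\fourk$, can express the meet $\bkand$ and the join $\bkor$ of the knowledge-ordered lattice; these play the roles of $\minop$ and $\maxop$, so PBel is canonically suitable and the notions of normal form and canonical completeness genuinely apply to it.

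Next I would isolate exactly what a normal-form formula over PBel can look like. The only permitted binary connectives are $\minop = \bkand$ and $\maxop = \bkor$, and the only permitted unary connectives are compositions of PBel's single primitive unary operator $\neg$. Since $\neg$ is an involution, the set of derived unary operators is just $\set{\mathrm{id},\neg}$. The key (routine) observation is that each of $\mathrm{id}$, $\neg$, $\bkand$ and $\bkor$ maps the all-$\bot$ argument to $\bot$: indeed $\neg\bot = \bot$, and $\bot \bkand \bot = \bot = \bot \bkor \bot$ because $\bot$ is the minimum element of $\fourk$.

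A short structural induction on the shape of a normal-form formula then shows that every normal-form formula over PBel evaluates to $\bot$ on the assignment sending every variable to $\bot$. On the other hand, for any $d \ne \bot$ the unary selection operator $\selop{\bot}{d}$ sends $\bot$ to $d \ne \bot$. Hence $\selop{\bot}{d}$ cannot be written in normal form, so not every unary selection operator is expressible in normal form, and therefore PBel is not canonically complete.

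I do not expect a real obstacle here. The only point requiring slight care is to make explicit that the argument depends only on structural features of PBel — that it has a single unary operator which is an involution, and that normal form is forced to use the lattice meet and join, all of which fix $\bot$ — rather than on the specific primitives $\btand$ and $\supset_{\rm b}$; once that is said, the conclusion is immediate from the computations above.
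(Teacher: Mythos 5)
Your argument is correct and rests on exactly the same key observation as the paper's Proposition~\ref{pro:belnap-normal-form}: every connective admissible in a normal-form formula ($\neg$, $\bkand$, $\bkor$) fixes $\bot$, so $\selop{\bot}{d}$ with $d \ne \bot$ cannot be expressed in normal form. The paper's own corollary proof simply notes that PBel's operators form a subset of the full Belnap operator set and invokes Proposition~\ref{pro:belnap-normal-form}, whereas you inline that argument for PBel's operator set directly; the mathematical content is the same.
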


\begin{proof}
PBel uses the set of operators $\set{\bneg,\btand,\supset_{\rm b},\bot,\top}$, which is a subset of $\set{\neg, \btand, \btor, \bkand, \bkor, \supset_{\rm b}, \bot,0,1,\top}$.
Thus, by Proposition~\ref{pro:belnap-normal-form}, this is not a canonically complete set of operators, so $L(\fourk,\set{\bneg,\btand,\supset_{\rm b},\bot,\top})$ is not canonically complete either.
Hence, we may conclude that PBel is not a canonically complete authorization language.
\end{proof}

%The minimal set of operators defined by Arieli and Avron and the PBel set of operators are subsets of $\set{\neg, \btand, \btor, \bkand, \bkor, \supset_{\rm b}, \bot,0,1,\top}$, hence we can also conclude that neither of these sets of operators is not canonically complete.

\section{A Canonically Complete\\ 4-valued Logic}
\label{sec:canon-suitable-set}

In the proof of Proposition~\ref{pro:belnap-normal-form}, we were unable to construct all unary selection operators using operators from the set $\set{\bneg,\bkand,\bkor}$, because there is no operator in which $\bot \oplus \bot \neq \bot$.
This suggests that we will require at least one additional unary operator $-$, say, such that $- \bot \ne \bot$.
Accordingly, we start with the unary operator, sometimes called ``conflation''~\cite{fitting:bilat91}, such that 
\[
 -\bot = \top,\ -\top = \bot,\ -0 = 0,\ \text{and } {-}1 = 1.
\]
Conflation is analogous to negation $\bnegt$, but inverts knowledge values rather than truth values.
In addition to $\bnegk$, we include the operator $\bkand$ in our set of operators, since this is the join operator for $\fourk$.

\begin{proposition}\label{pro:canonically-suitable-operators-for-belnap}
 $L(\fourk,\set{\bnegk,\bkand})$ is canonically suitable.
\end{proposition}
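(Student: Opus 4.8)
The plan is to exhibit two formulas of $L(\fourk,\set{\bnegk,\bkand})$, call them $\maxf$ and $\minf$, computing respectively the supremum and the infimum of the lattice $\fourk$; this is exactly what canonical suitability demands. One of the two is free of charge: $\bkand$ is itself one of the lattice operations of $\fourk$ (the join, in the notation of the paragraph above), so I would simply take $\maxf(x,y) \equiv x \bkand y$. All the work then goes into recovering the other lattice operation --- the knowledge-meet --- from $\bkand$ and conflation alone.

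The key step is to observe that conflation $\bnegk$ is an \emph{order-reversing involution} of $\fourk$. It is an involution directly from its definition ($\bnegk\bnegk\bot = \bnegk\top = \bot$, and $\bnegk$ fixes $0$ and $1$); that it reverses $\kord$ I would check on the four covering pairs of $\fourk$ --- $\bot \kord 0$, $\bot \kord 1$, $0 \kord \top$, $1 \kord \top$ --- whose images under $\bnegk$ are $1 \kord \top$, $0 \kord \top$, $\bot \kord 0$, $\bot \kord 1$. Being an order-reversing bijection of a lattice, $\bnegk$ turns joins into meets, so $\bnegk(x \bkand y) = \bnegk x \bkor \bnegk y$ with $\bkor$ the knowledge-meet; substituting $\bnegk x,\bnegk y$ for $x,y$ and applying $\bnegk$ once more gives $x \bkor y = \bnegk(\bnegk x \bkand \bnegk y)$. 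I would therefore set $\minf(x,y) \equiv \bnegk(\bnegk x \bkand \bnegk y)$, a formula of $L(\fourk,\set{\bnegk,\bkand})$ computing $\inf\set{x,y}$.

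To finish I would confirm that these two candidates really do realise $\sup$ and $\inf$ on $\fourk$ --- a finite check against the $4\times 4$ tables of the two knowledge-lattice operations, which reduces to idempotence together with the pairs containing $\bot$, the pairs containing $\top$, and the pair $\set{0,1}$. With $\maxf$ returning $\sup\set{x,y}$ and $\minf$ returning $\inf\set{x,y}$, the definition of canonical suitability for lattice-based logics is satisfied, so $L(\fourk,\set{\bnegk,\bkand})$ is canonically suitable.

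I do not expect a genuine obstacle here. The one point deserving care is the claim that $\bnegk$ reverses the knowledge order $\kord$ --- rather than fixing it, or reversing the truth order $\tord$ --- since that is precisely what licenses the De Morgan step producing the missing lattice operation. Everything else is the evaluation of small truth tables.
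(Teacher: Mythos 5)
Your strategy is exactly the paper's: keep one of the two knowledge-lattice operations as a primitive and recover the other by De~Morgan through conflation, via the identity $x \bkor y = \bnegk(\bnegk x \bkand \bnegk y)$. The paper justifies this identity by a sixteen-row decision table (its Appendix~C), whereas you justify it by observing that $\bnegk$ is an order-reversing involution of $(\four,\kord)$; your route is cleaner and more informative, and the order-reversal check on covering pairs is the right way to do it. (One transcription slip there: the images under $\bnegk$ of the covering pairs $\bot \kord 0$ and $\bot \kord 1$ are $0 \kord \top$ and $1 \kord \top$ respectively --- you have listed these two in the wrong order --- but the conclusion that $\bnegk$ reverses $\kord$ is unaffected.)

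The one substantive error is that you have the two lattice operations the wrong way round. By its truth table, $\bkand$ is the \emph{meet} (infimum) of $(\four,\kord)$ and $\bkor$ is the join: for instance $\bot \bkand \top = \bot = \inf\set{\bot,\top}$, whereas $\sup\set{\bot,\top} = \top$. So the correct assignment is $\minf(x,y) \equiv x \bkand y$ and $\maxf(x,y) \equiv \bnegk(\bnegk x \bkand \bnegk y) = x \bkor y$ --- the reverse of what you wrote. (You were arguably misled by the paper's own prose, which just before the proposition calls $\bkand$ ``the join operator for $\fourk$''; the truth tables and the paper's proof make clear it is the meet, with $\minop$ corresponding to $\bkand$ and $\maxop$ to $\bkor$.) Since your argument does produce both lattice operations and you explicitly plan a final check against the $4\times 4$ tables, the fix is only a relabelling; but as written, $\maxf(x,y) \equiv x \bkand y$ does not return $\sup\set{x,y}$, so the proposal does not literally meet the definition of canonical suitability until the two labels are swapped.
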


Informally, the proof follows from the fact that $\bnegk$ and $\bkand$ have exactly the same effect on $\fourk$ as $\neg$ and $\btand$ have on $(\four,\tord)$.
More formally, the following equivalence holds~\cite{arieli:four98}:
\[ 
d \bkor d' \equiv \bnegk (\bnegk d \bkand \bnegk d'). 
\]
The decision table establishing this equivalence is given in Figure~\ref{fig:bkor-encoding} (in Appendix~\ref{app:canonical-suitability-proof}).
Hence, we conclude that the set of operators $\set{\bnegk,\bkand}$ is canonically suitable, since $\minop$ corresponds to $\bkand$ and $\maxop$ corresponds to $\bkor$.

% However, we show in the following proposition that the set $\set{\bnegk,\bkand}$ is not a functionally complete set of operators.
 
 \begin{proposition}
  $L(\fourk,\set{\bnegk,\bkand})$ is not functionally complete.
 \end{proposition}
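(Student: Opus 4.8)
The plan is to show that the operators $\bnegk$ and $\bkand$ share a structural property that is not shared by all functions on $\fourk$, so that no composition of them can escape the constraint. The key observation is that both $\bnegk$ and $\bkand$ preserve the set $\set{0,1}$ of conclusive decisions: we have $\bnegk 0 = 0$, $\bnegk 1 = 1$, and $0 \bkand 0 = \bot$, $1 \bkand 1 = \top$, $0 \bkand 1 = 1 \bkand 0 = \bot$ — wait, that last computation already leaves $\set{0,1}$. So the cleaner invariant is the one used in Proposition~\ref{pro:belnap-normal-form}: every operator in $\set{\bnegk,\bkand}$ fixes $\bot$, i.e.\ $\bnegk\bot = \top \ne \bot$ — no, conflation does \emph{not} fix $\bot$. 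Let me instead use the invariant that actually works: $\bnegk$ and $\bkand$ both preserve the property of being \emph{self-conflate-free} in a suitable sense. Concretely, I would look for a unary function on $\fourk$ that cannot be realized, by exhibiting a congruence or a closed subset argument.

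The cleanest route: consider the subset $W = \set{\bot,\top}$. Check that $W$ is closed under both operators restricted appropriately — $\bnegk$ maps $W$ to $W$ (swapping $\bot$ and $\top$), and $\bkand$ restricted to $W\times W$ gives $\bot\bkand\bot=\bot$, $\top\bkand\top=\top$, $\bot\bkand\top=\bot$, so $W$ is closed under $\bkand$. Hence any formula built from $\bnegk,\bkand$ whose inputs all lie in $W$ produces an output in $W$. Now take any unary selection operator $\selop{0}{1}$: it sends $0 \mapsto 1$ but sends $\bot\mapsto\bot$ and $\top\mapsto\bot$, so on inputs from $W$ it is the constant $\bot$, which \emph{is} in $W$ — so this particular operator is not ruled out by the $W$-argument. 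I would therefore instead target a selection operator with a conclusive input: $\selop{0}{0}$ is fine on $W$ too. The operator to rule out is one like $\selop{1}{0}$, which sends $1\mapsto 0$: this takes an input in $\set{0,1}$ outside $W$, so $W$ doesn't help directly either.

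The argument that genuinely works is a counting/independence argument on the conclusive decisions. I would show that $\set{0,1}$ is closed under $\bnegk$ and under the \emph{diagonal} of $\bkand$ only up to the bilattice structure: specifically, define $\pi:\fourk\to\set{\bot,\top}$ by $\pi(\bot)=\pi(0)=\pi(1)=\bot$, $\pi(\top)=\top$ — and check this is a homomorphism from $(\fourk,\bnegk,\bkand)$? We need $\pi(\bnegk x) = \bnegk \pi(x)$: $\pi(\bnegk 0)=\pi(0)=\bot$ but $\bnegk\pi(0)=\bnegk\bot=\top$. Not a homomorphism. The right projection is $\mu:\fourk\to(\four,\tord)$ being the identity, which tells us $\set{\bnegk,\bkand}$ only generates functions already expressible in $(\four,\tord)$ via $\set{\neg,\wedge_t}$ — and $L((\four,\tord),\set{\neg,\wedge_t})$ is not functionally complete by a classical argument (it is essentially a sublogic of Kleene's $3$-valued logic on the truth axis, missing any operator producing $\top$ from inputs avoiding $\top$ along the truth order). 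So the main obstacle, which I would resolve by picking the correct invariant, is identifying a property of $\set{\bnegk,\bkand}$ preserved under composition but violated by some $f:\fourk^n\to\fourk$: I expect the right one is that $\set{\bnegk,\bkand}$ are all \emph{$\kord$-monotone after the $\bnegk$-twist}, equivalently they are the meet and a lattice automorphism of $\fourk$, so every generated function is monotone with respect to \emph{some} partial order refining the structure, whereas the selection operator $\selop{0}{0}$ (non-monotone: $\bot \kord 0$ but $\selop{0}{0}(\bot)=\bot$, $\selop{0}{0}(0)=0$, consistent — try $0 \kord \top$: $\selop{0}{0}(0)=0 \not\kord \bot = \selop{0}{0}(\top)$) fails it. Thus the concrete target is $\selop{0}{0}$, and the final step is: every formula over $\set{\bnegk,\bkand}$ is $\kord$-monotone up to conflation (since $\bkand$ is $\kord$-monotone and $\bnegk$ is a dual lattice automorphism that squares to the identity), so it cannot equal $\selop{0}{0}$, hence the logic is not functionally complete. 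I would write out the monotonicity-up-to-$\bnegk$ induction carefully, as that is the one nonroutine point.
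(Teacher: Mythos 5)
There is a genuine gap. Your final argument rests on the claim that every formula over $\set{\bnegk,\bkand}$ is ``$\kord$-monotone up to conflation,'' but that invariant is not preserved under composition. Consider the formula $g(x) = x \bkand \bnegk x$, which evaluates to $\bot, 0, 1, \bot$ on the inputs $\bot, 0, 1, \top$ respectively. This $g$ is not $\kord$-monotone ($0 \kord \top$ but $g(0)=0 \not\kord \bot = g(\top)$), not $\kord$-antitone ($\bot \kord 0$ but $g(0)=0\not\kord\bot=g(\bot)$), and pre- or post-composing with $\bnegk$ does not repair either property. So the ``monotonicity-up-to-$\bnegk$ induction'' you defer to the end fails at the very first nontrivial composite. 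The other route you gesture at --- that $\set{\bnegk,\bkand}$ on $\fourk$ is a copy of $\set{\neg,\btand}$ on $(\four,\tord)$ and hence incomplete ``by a classical argument'' --- is a correct intuition but is never made into a proof.

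The frustrating part is that you had a complete proof in hand and discarded it. Your observation that $W=\set{\bot,\top}$ is closed under both $\bnegk$ and $\bkand$ is correct, and it suffices: any formula whose inputs all lie in $W$ evaluates to a value in $W$, so no formula can realize a function $f$ with $f(\bot,\dots,\bot)\in\set{0,1}$ --- for instance the constant function $0$, or the unary selection operator $\selop{\bot}{0}$, which sends $\bot$ to $0$. You abandoned this argument only because you tested it against targets ($\selop{0}{1}$, $\selop{0}{0}$, $\selop{1}{0}$) that happen to map $W$ into $W$; you needed only to pick a function that moves a $W$-tuple outside $W$. The paper's own proof is of exactly this closed-set form, but with an even smaller closed set: $\set{0}$ is closed under both operators ($\bnegk 0 = 0$ and $0 \bkand 0 = 0$), so every expressible operator $\circ$ satisfies $0 \circ 0 = 0$, and any function violating that identity witnesses incompleteness.
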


 \begin{proof}
  The proof follows from the following observations:%
  \begin{inparaenum}[(i)]
   \item for the operators $\bnegk$ and $\bkand$, $\bnegk(0) = 0$ and $0 \bkand 0 = 0$; and
   \item any operator $\circ$ which is a combination of $\bnegk$ and $\bkand$, we have $0 \circ 0 = 0$.
  \end{inparaenum}
  Thus it is impossible to construct an operator in which $0 \circ 0 \ne 0$.
 \end{proof} 

To summarize: $L(\fourk,\set{\neg, \btand, \btor, \bkand, \bkor, \supset_{\rm b}, \bot,0,1,\top})$ is not canonically complete and $L(\fourk,\set{\bnegk,\bkand})$ is not functionally complete.
We now investigate what additional operators should be defined to construct a set of operators which is canonically complete (and hence functionally complete).

Given that we cannot use any operators besides $\maxop$ and $\minop$ in normal form, we focus on defining additional unary operators on $\fourk$.
An important observation at this point is that any permutation (that is, a bijection) $\pi : \four \rightarrow \four$ defines a unary operator on $\four$.
Accordingly, we now explore the connections between the group of permutations on $\four$ and unary operators on $\four$.
  
\subsection{The Symmetric Group and Unary\\ Operators}

  The \emph{symmetric group} $(S_X,\circ)$ on a finite set of $|X|$ symbols is the group whose elements are all permutations of the elements in $X$, and whose group operation $\circ$ is function composition.
  In other words, given two permutations $\pi_1$ and $\pi_2$, $\pi_1 \circ \pi_2$ is a permutation such that 
  \[
   (\pi_1 \circ \pi_2)(x) \stackrel{\rm def}{=} \pi_1(\pi_2(x)).
  \]
  We write $\pi^k$ to denote the permutation obtained by composing $\pi$ with itself $k$ times.
  
  A \emph{transposition} is a permutation which exchanges two elements and keeps all others fixed.
  Given two elements $a$ and $b$ in $X$, the permutation
   \[ \pi(x) =
     \begin{cases}
       b       & \quad \text{if } x = a, \\
       a & \quad \text{if } x = b, \\
       x & \quad \text{otherwise}, \\
     \end{cases}
   \]
  is a transposition, which we denote by $(a\, b)$.
  A \emph{cycle} of \emph{length} $k \geqslant 2$ is a permutation $\pi$ for which there exists an element $x$ in $X$ such that $x, \pi(x),\pi^2(x),\dots,\pi^k(x) = x$ are the only elements changed by $\pi$.
  Given $a$, $b$ and $c$ in $X$, for example, the permutation 
  \[ \pi(x) =
    \begin{cases}
      b       & \quad \text{if } x = a,\\
      c & \quad \text{if } x = b,\\
      a & \quad \text{if } x = c,\\
      x & \quad \text{otherwise},\\
    \end{cases}
  \]
  is a cycle of length $3$, which we denote by $(a\, b\, c)$.
  (Cycles of length two are transpositions.)
  The symmetric group $S_X$ is \emph{generated} by its cycles. 
  That is, every permutation may be represented as the composition of some combination of cycles.
  
  In fact, stronger results are known.
  We first introduce some notation.
  Let $X = \set{x_1,\dots,x_n}$ and let $S_n$ denote the symmetric group on the set of elements $\set{1,\dots,n}$.
  Then $(S_X,\circ)$ is trivially isomorphic to $(S_n,\circ)$ (via the mapping $x_i \mapsto i$).

  \begin{theorem}\label{thm:n-1-transpositions}
  For $n \geqslant 2$, $S_n$ is generated by the transpositions $(1\, 2),(1\, 3),\dots,(1\, n).$
  \end{theorem}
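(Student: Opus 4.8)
The plan is to reduce the theorem to a single identity among transpositions. Let $H \leqslant S_n$ denote the subgroup generated by $\transp{1}{2}, \transp{1}{3}, \dots, \transp{1}{n}$; the goal is to prove $H = S_n$. First I would observe that $S_n$ is generated by \emph{all} transpositions. This follows from the fact quoted above that $S_n$ is generated by its cycles, together with the elementary decomposition of a $k$-cycle into $k-1$ transpositions, namely
\[
  \cycle{a_1}{a_2}{a_k} = \transp{a_1}{a_2} \circ \transp{a_2}{a_3} \circ \cdots \circ \transp{a_{k-1}}{a_k},
\]
which one checks by tracking the image of each symbol under the composite (recalling the convention $(\pi_1 \circ \pi_2)(x) = \pi_1(\pi_2(x))$, so the rightmost factor acts first). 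Hence it suffices to show that every transposition $\transp{i}{j}$ with $i \neq j$ lies in $H$.

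If $i = 1$ or $j = 1$, then $\transp{i}{j}$ is one of the generators of $H$ and there is nothing to prove. Otherwise $i, j \in \set{2,\dots,n}$ are distinct, and I would verify the identity
\[
  \transp{i}{j} = \transp{1}{i} \circ \transp{1}{j} \circ \transp{1}{i}
\]
by checking that both sides agree on every element of $\set{1,\dots,n}$: under the right-hand side the symbol $1$ goes $1 \mapsto i \mapsto i \mapsto 1$, the symbol $i$ goes $i \mapsto 1 \mapsto j \mapsto j$, the symbol $j$ goes $j \mapsto j \mapsto 1 \mapsto i$, and any symbol outside $\set{1,i,j}$ is fixed by all three factors --- exactly the behaviour of $\transp{i}{j}$. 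Since the right-hand side is a product of generators of $H$, we get $\transp{i}{j} \in H$, and combining with the previous paragraph gives $H = S_n$.

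I do not expect a genuine obstacle here; the only points needing care are the bookkeeping of composition order in the two verifications and, if one prefers not to lean on the quoted ``cycles generate $S_n$'' fact, replacing the first paragraph by a short induction on $n$: the base case $n=2$ is immediate, and for the inductive step $H$ contains $\transp{1}{n}$ together with (by the inductive hypothesis applied to $\set{1,\dots,n-1}$) a copy of $S_{n-1}$ fixing $n$; these act transitively on $\set{1,\dots,n}$ and the stabiliser of $n$ in $H$ contains that copy of $S_{n-1}$, so by the orbit--stabiliser theorem $|H| \geqslant n \cdot (n-1)! = n!$, forcing $H = S_n$. Invoking the stated fact, however, makes this detour unnecessary.
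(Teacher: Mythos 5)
Your proof is correct. There is nothing in the paper to compare it against: the authors state this theorem (and Theorem~\ref{thm:ab-arb-transp-gen-set}) without proof, as standard facts about symmetric groups, and immediately move on to apply them. Your argument is the usual textbook one --- reduce to showing every transposition lies in the subgroup $H$ via the decomposition of a $k$-cycle into $k-1$ transpositions, then use the identity $\transp{i}{j} = \transp{1}{i} \circ \transp{1}{j} \circ \transp{1}{i}$ --- and your verifications correctly respect the paper's convention that the rightmost factor of a composition acts first. The orbit--stabiliser fallback you sketch is also sound and has the minor virtue of not leaning on the paper's own unproved assertion that cycles generate $S_n$.
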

  
  \begin{theorem}\label{thm:ab-arb-transp-gen-set}
   For $1 \leqslant a < b \leqslant n$, the transposition $\transp{a}{b}$ and the cycle $(1\, 2\, \dots \, n)$ generate $S_n$ if and only if the greatest common divisor of $b-a$ and $n$ equals $1$.
  \end{theorem}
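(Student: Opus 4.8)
The plan is to work throughout with the standard identification of $\set{1,\dots,n}$ with the residues modulo $n$, so that $\sigma := \cycle{1}{2}{n}$ acts as $i \mapsto i+1 \pmod n$. Write $\tau := \transp{a}{b}$, set $d := b-a$ (so $1 \leqslant d \leqslant n-1$), and let $H := \langle \sigma,\tau\rangle$. The single observation driving everything is that conjugation by $\sigma$ merely translates a transposition: $\sigma^{k}\tau\sigma^{-k} = \transp{a+k}{b+k}$, with indices read mod $n$. Consequently $H$ contains the transposition $\transp{i}{i+d}$ for \emph{every} $i$, and the theorem reduces to the question of when these $n$ transpositions generate $S_n$ (for the forward implication) versus when $H$ must be a proper subgroup (for the reverse).

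For the ``if'' direction, assume $\gcd(d,n)=1$. I would use the elementary conjugation identity $\transp{x}{y}\,\transp{y}{z}\,\transp{x}{y} = \transp{x}{z}$ along the arithmetic progression $0,d,2d,3d,\dots$: since $\transp{kd}{(k+1)d}\in H$ for all $k$, a short induction on $k$ gives $\transp{0}{kd}\in H$ for $k=1,\dots,n-1$. Because $d$ is a unit modulo $n$, the set $\set{d,2d,\dots,(n-1)d}$ reduces mod $n$ to $\set{1,\dots,n-1}$, so $H$ contains all transpositions $\transp{n}{j}$ with $j\in\set{1,\dots,n-1}$ (writing the residue $0$ as $n$). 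This is a ``star'' of $n-1$ transpositions through a common point, so by Theorem~\ref{thm:n-1-transpositions} — applied after the trivial relabelling moving the common point to $1$ — they already generate $S_n$; hence $H = S_n$.

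For the ``only if'' direction I would prove the contrapositive: suppose $e := \gcd(d,n) > 1$. Since $e \mid d$ and $d \leqslant n-1$ we have $2 \leqslant e < n$, so the residue classes modulo $e$ partition $\set{1,\dots,n}$ into $e$ blocks $C_0,\dots,C_{e-1}$, each of size $n/e \geqslant 2$. Using $e \mid n$, the cycle $\sigma$ sends $C_j$ to $C_{j+1 \bmod e}$, while $\tau = \transp{a}{b}$ swaps two points of the same block (as $a \equiv b \pmod e$) and fixes all others; thus both generators preserve this block system, and hence so does every element of $H$. But $S_n$ contains permutations violating it — for instance $\transp{c_0}{c_1}$ with $c_0 \in C_0$, $c_1 \in C_1$, since $|C_0| \geqslant 2$ forces the image of $C_0$ under this transposition not to be a block. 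Therefore $H \subsetneq S_n$ (equivalently, $H$ is imprimitive whereas $S_n$ is primitive for these $n \geqslant 4$).

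The routine ingredients — the conjugation formula, the telescoping identity, and checking that the two generators fix the block partition — are all one-liners. I expect the only genuine content to lie in the ``if'' direction: one must see that coprimality is exactly what lets the $\sigma$-conjugates of $\tau$ ``connect up'' all $n$ points into a single star (equivalently, into a connected transposition graph), which is the point at which Theorem~\ref{thm:n-1-transpositions} does the work. Everything else is bookkeeping about residues modulo $n$ and modulo $e$.
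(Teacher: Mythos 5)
The paper does not prove this theorem at all: it is quoted as a known fact about generating sets of $S_n$ (alongside Theorem~\ref{thm:n-1-transpositions}, which is likewise stated without proof), so there is no in-paper argument to compare against. Your proof is correct and complete: the conjugation formula $\sigma^{k}\transp{a}{b}\sigma^{-k}=\transp{a+k}{b+k}$, the telescoping of the transpositions $\transp{kd}{(k+1)d}$ into the star $\set{\transp{0}{kd} : 1\leqslant k\leqslant n-1}$ when $\gcd(d,n)=1$ (which reduces sufficiency to Theorem~\ref{thm:n-1-transpositions}), and the block system of residues modulo $e=\gcd(d,n)>1$ preserved by both generators for necessity are all sound, and the side conditions (that $kd\not\equiv 0\pmod n$ for $1\leqslant k\leqslant n-1$, and that the blocks have size at least $2$) are exactly the ones that need checking.
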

  
  In other words, it is possible to find a generating set comprising only transpositions, and it is possible to find a generating set containing only two elements.

  \subsection{New Unary Operators}
  
We now define three unary operators $\btran{0}, \btran{1}$ and $\btran{\top}$, which swap the value of $\bot$ and the truth value in the operator's subscript.
The truth tables for these operators are shown in Figure~\ref{fig:transpositions-btran}.
Note that $\btran{\top}$ is identical to the conflation operator $\bnegk$.
However, in the interests of continuity and consistency we will use the $\btran{\top}$ notation in the remainder of this section.

\begin{figure}[ht]
 \[
  \begin{array}{c|c|c|c}
   ~d~ & \btran{0} d~ & \btran{1} d & \btran{\top} d \\
  \hline
   \bot & 0 & 1 & \top \\
   0 & \bot & 0 & 0 \\
   1 & 1 & \bot & 1 \\
   \top & \top & \top & \bot\\
  \end{array}
 \]
 \caption{$\btran{0}, \btran{1}$ and $\btran{\top}$}\label{fig:transpositions-btran}
\end{figure}

Notice that $\btran{0}$, $\btran{1}$ and $\btran{\top}$ permute the elements of $\four$ and correspond to the transpositions $(\bot\, 0)$, $(\bot\, 1)$ and $(\bot\, \top)$, respectively.
Thus we have the following elementary result.

\begin{proposition}
\label{prop:any-permutation}
Any permutation on $\four$ can be expressed using only operators from the set $\set{\btran{0}, \btran{1},\btran{\top}}$.
\end{proposition}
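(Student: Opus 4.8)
The plan is to recognize that the statement is an immediate consequence of the group-theoretic facts already established, together with the explicit identification of the new unary operators with transpositions. The key observation, which precedes the proposition in the excerpt, is that $\btran{0}$, $\btran{1}$ and $\btran{\top}$ correspond to the transpositions $(\bot\, 0)$, $(\bot\, 1)$ and $(\bot\, \top)$ in the symmetric group on $\four = \set{\bot,0,1,\top}$. Since composition of unary operators on $\four$ is exactly function composition of the corresponding permutations, a formula built from these three operators evaluates to the permutation obtained by composing the associated transpositions in the same order.

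First I would set up the correspondence explicitly: regarding $\four$ as a four-element set and relabelling $\bot,0,1,\top$ as $1,2,3,4$, the operators $\btran{0}, \btran{1}, \btran{\top}$ become the transpositions $(1\,2), (1\,3), (1\,4)$, which is precisely the generating set $(1\,2),(1\,3),\dots,(1\,n)$ of Theorem~\ref{thm:n-1-transpositions} in the case $n = 4$. Then I would invoke Theorem~\ref{thm:n-1-transpositions} to conclude that every element of $S_4$ — equivalently, via the isomorphism $S_X \cong S_n$ noted in the excerpt, every permutation of $\four$ — can be written as a composition of these three transpositions. Translating back through the operator/permutation correspondence, every permutation of $\four$, viewed as a unary operator, is expressible as a composition (i.e.\ a nested application) of $\btran{0}, \btran{1}$ and $\btran{\top}$, which is exactly the claim.

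There is essentially no hard part: the only thing to be careful about is the direction of composition, since the proposition concerns building \emph{operators} by nesting, and one must make sure this matches the group operation $\circ$ defined in the excerpt by $(\pi_1 \circ \pi_2)(x) = \pi_1(\pi_2(x))$ — but this is precisely how nested unary operators act, so the identification is automatic. I would also note in passing that the conclusion could alternatively be derived from Theorem~\ref{thm:ab-arb-transp-gen-set} using a transposition and the $4$-cycle, but the transposition-only generating set of Theorem~\ref{thm:n-1-transpositions} gives the cleanest argument and directly matches the three operators at hand. The proof is therefore just: identify the operators with the standard star-shaped generating set of $S_4$, and apply the generation theorem.
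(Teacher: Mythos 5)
Your proof is correct and follows essentially the same route as the paper's: identify $\btran{0}$, $\btran{1}$, $\btran{\top}$ with the transpositions $(\bot\,0)$, $(\bot\,1)$, $(\bot\,\top)$ and invoke Theorem~\ref{thm:n-1-transpositions} to conclude they generate all of $S_{\four}$. Your additional remarks on the relabelling and the direction of composition only make explicit what the paper leaves implicit.
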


\begin{proof}
The operators $\btran{0}, \btran{1}$ and $\btran{\top}$ are the transpositions $(\bot\, 0),(\bot\, 1)$ and $(\bot\, \top)$ respectively.
By Theorem~\ref{thm:n-1-transpositions}, these operators generate all the permutations in $S_\four$.
% If we translate the values $\bot,0,1,\top$ to $1,2,3,4$, it is clear the operators $\btran{0}, \btran{1}$ and $\btran{\top}$ are the transpositions $(1\, 2), (1\, 3)$ and $(1\, 4)$.
% By Theorem~\ref{thm:n-1-transpositions}, we know that these transpositions generate the group $S_4$, whose elements are all permutations of the set $\set{1,2,3,4}$, and hence all permutation of the set $\set{\bot,0,1,\top}$.
\end{proof}

\begin{lemma}
\label{lem:any-tuple-1234}
  It is possible to express any function $\phi : \four \rightarrow \four$ as a formula in $L(\fourk,\set{\btran{0}, \btran{1},\btran{\top},\bkand,\bkor})$.
 \end{lemma}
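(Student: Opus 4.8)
The plan is to reduce the lemma to the expressibility of the unary selection operators, and then to build those operators explicitly. Arguing exactly as in the general discussion of Section~\ref{sec:completeness-in-mvalued-logics}, but with $\bot$ now the minimum truth value and $\bkor$ in the role of $\maxop$, every $\phi : \four \rightarrow \four$ satisfies $\phi(x) \equiv \bigmaxop_{a \in A}\selop{a}{\phi(a)}(x)$, where $A = \set{a \in \four : \phi(a) \neq \bot}$. The one corner case, $A = \emptyset$ (i.e.\ $\phi \equiv \bot$), I would handle with an explicit constant-$\bot$ formula such as $\selop{\top}{\top}(x) \bkand \btran{\top}\big(\selop{\top}{\top}(x)\big)$, which is identically $\bot$ since $\btran{\top}\top = \bot$ and $\bot \bkand y = \bot$. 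Thus it suffices to express each unary selection operator $\selop{a}{j}$ with $a \in \four$ and $j \in \set{0,1,\top}$ as a formula in $L(\fourk,\set{\btran{0},\btran{1},\btran{\top},\bkand,\bkor})$.

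The key step is to exhibit one such operator explicitly. I claim $\selop{\top}{\top}(x) \equiv \btran{0}x \bkand \btran{1}x$. This is checked on the four inputs: on $x = 0$ the first operand is $\bot$, on $x = 1$ the second operand is $\bot$, on $x = \bot$ the two operands are the incomparable values $0$ and $1$ whose meet in $\fourk$ is $\bot$, and on $x = \top$ both operands equal $\top$; hence the expression is $\top$ exactly when $x = \top$ and $\bot$ otherwise.

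For arbitrary $a \in \four$ and $j \in \set{0,1,\top}$ I would then obtain $\selop{a}{j}$ by conjugating $\selop{\top}{\top}$ with permutations. By Proposition~\ref{prop:any-permutation} there are formulas realising a permutation $\pi$ of $\four$ with $\pi(a) = \top$, and a permutation $\sigma$ of $\four$ with $\sigma(\bot) = \bot$ and $\sigma(\top) = j$ (such permutations exist because $j \neq \bot$). Then $\selop{a}{j}(x) \equiv \sigma\big(\selop{\top}{\top}(\pi(x))\big)$: at $x = a$ the inner subformula evaluates to $\selop{\top}{\top}(\top) = \top$ and $\sigma(\top) = j$, whereas for $x \neq a$ we have $\pi(x) \neq \top$, so the inner subformula evaluates to $\bot$ and $\sigma(\bot) = \bot$. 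Substituting the transposition formulas for $\pi$ and $\sigma$ into the formula for $\selop{\top}{\top}$ yields a formula of the required form; combining this with the join decomposition above proves the lemma.

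I expect the only genuinely nontrivial point to be the construction of the base operator $\selop{\top}{\top}$: one must isolate a single element of $\four$ using nothing but the three transpositions $\btran{0},\btran{1},\btran{\top}$ and the lattice operations $\bkand,\bkor$, with no constants at hand. Once a single unary selection operator is available, the rest is routine, since the join decomposition is standard and Proposition~\ref{prop:any-permutation} already supplies every permutation of $\four$ as a unary operator, so the conjugation step stays within the logic.
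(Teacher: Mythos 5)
Your proof is correct, but it diverges from the paper's after the first step. Both arguments begin with the same join decomposition of $\phi$ into unary selection operators (the paper writes these as $\phi_j^y$, which is just $\selop{y}{j}$), and your explicit treatment of the all-$\bot$ corner case is actually more careful than the paper's, whose meet-of-permutations construction silently breaks down for $\phi_\bot^{y}$. Where you differ is in how the selection operators themselves are obtained: the paper expresses each $\selop{y}{j}$ directly as $\phi_{j,y_1}\bkand\phi_{j,y_2}\bkand\phi_{j,y_3}$, a meet of three \emph{permutations} that agree (with value $j$) at $y$ and take the value $\bot$ at pairwise distinct other points, so that every formula it produces is already a meet of unary operators applied to $x$; you instead verify the single identity $\selop{\top}{\top}(x)\equiv\btran{0}x\bkand\btran{1}x$ (which checks out) and then conjugate, writing $\selop{a}{j}(x)\equiv\sigma(\selop{\top}{\top}(\pi(x)))$. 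Your route is arguably slicker and more concrete, since it rests on one four-line truth-table check plus Proposition~\ref{prop:any-permutation}. The trade-off is that your formula wraps the outer permutation $\sigma$ around a $\bkand$, so it is \emph{not} in normal form, and $\sigma$ is not a lattice homomorphism when $j\neq\top$ (e.g.\ $\sigma(0\bkand 1)=\sigma(\bot)=\bot$ but $\sigma(0)\bkand\sigma(1)$ can differ), so it cannot simply be pushed inside. This is harmless for the lemma as stated, but the paper's proof of Theorem~\ref{thm:func-and-canon-complete} relies on the formulae built \emph{in this proof} already being in normal form; with your construction, canonical (as opposed to merely functional) completeness would still require the paper's meet-of-permutations argument or something equivalent.
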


 \begin{proof}
  For convenience, we represent the function $\phi : \four \rightarrow \four$  as the tuple 
  \[
   (\phi(\bot),\phi(0),\phi(1),\phi(\top)) = (a,b,c,d).
  \]
  Then, given $x,y,z \in \four$, we define the function 
   \[ \phi_x^y(z) = %
     \begin{cases}
      x & \text{if $z = y$}, \\
      \bot & \text{otherwise}.
     \end{cases}
   \]
   Thus, for example, $\phi_a^{\bot} = (a,\bot,\bot,\bot)$.
   Then it is easy to see that for all $x \in \four$
   \[
    \phi(x) = \phi_a^{\bot}(x) \bkor \phi_b^{0}(x) \bkor \phi_c^{1}(x) \bkor \phi_d^{\top}(x).
   \]
   That is $\phi = \phi_a^{\bot} \bkor \phi_b^{0} \bkor \phi_c^{1} \bkor \phi_d^{\top}$.

   Thus, it remains to show that we can represent $\phi_a^{\bot}$, $\phi_b^{0}$, $\phi_c^{1}$ and $\phi_d^{\top}$ as formulas using the operators in $\set{\btran{0}, \btran{1},\btran{\top},\bkand,\bkor}$.
   First consider the permutations $\phi_{a,0}$, $\phi_{a,1}$ and $\phi_{a,\top}$, represented by the tuples $(a,\bot,b_1,c_1)$, $(a,b_2,\bot,c_2)$ and $(a,b_3,c_3,\bot)$, respectively.%
   \footnote{Note that the specific values of $b_i$ and $c_i$ are not important: it suffices that each of $\phi_{a,0}$, $\phi_{a,1}$ and $\phi_{a,\top}$ are permutations; once $b_i$ is chosen such that $b_i \not\in \set{a,\bot}$, then $c_i$ is fixed.}
   Since  $\phi_{a,0}$, $\phi_{a,1}$ and $\phi_{a,\top}$ are permutations, we know they can be written as some combination of the unary operators.
   Moreover, 
   \[
    \phi_a^{\bot} \equiv \phi_{a,0} \bkand \phi_{a,1} \bkand \phi_{a,\top}
   \]
   Clearly, we can construct $\phi_b^{0}$, $\phi_c^{1}$ and $\phi_d^{\top}$ in a similar fashion.
   The result now follows.
 \end{proof}

  The decision tables showing the construction of $\phi_a^{\bot}$ (column 5) and $\phi$ (column 10) are shown in Figure~\ref{fig:constructing-A-B-C-D}.

  \begin{figure*}[ht]\centering
     \[
      \begin{array}{c|c|c|c|c|c|c|c|c|c}
       x & \phi_{a,0} & \phi_{a,1} & \phi_{a,\top}& \phi_{a,0} \bkand \phi_{a,1} \bkand \phi_{a,\top} & \phi_a^{\bot} & \phi_b^{0} & \phi_c^{1} & \phi_d^{\top} & \phi_a^{\bot} \bkor \phi_b^{0} \bkor \phi_c^{1} \bkor \phi_d^{\bot}\\
      \hline
       \bot & a & a & a & a & a& \bot & \bot & \bot & a\\
       0 & \bot & b_2 & b_3 & \bot & \bot & b & \bot & \bot & b\\
       1 & b_1 & \bot & c_3 & \bot & \bot & \bot & c & \bot & c\\
       \top & c_1 & c_2 & \bot & \bot & \bot &  \bot & \bot & d & d
      \end{array}
  \]
  \caption{Expressing $\phi : \four \rightarrow \four$ using operators in $\set{\btran{0}, \btran{1},\btran{\top},\bkand,\bkor}$}\label{fig:constructing-A-B-C-D}
  \end{figure*}

  \begin{theorem}
  \label{thm:func-and-canon-complete}
  $L(\fourk,\set{\btran{0}, \btran{1},\btran{\top},\bkand,\bkor})$ is functionally and canonically complete.
  \end{theorem}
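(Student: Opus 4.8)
The plan is to prove the two completeness claims separately: functional completeness follows almost immediately from Jobe's characterisation, while canonical completeness comes from inspecting the formula produced by the construction in Lemma~\ref{lem:any-tuple-1234}. For functional completeness I would invoke Theorem~\ref{thm:jobe-functional-completeness} — a logic is functionally complete precisely when every unary selection operator is equivalent to some formula in it. Lemma~\ref{lem:any-tuple-1234} already shows that \emph{every} function $\four\to\four$, in particular every unary selection operator $\selop{a}{j}$, is expressible in $L(\fourk,\set{\btran{0},\btran{1},\btran{\top},\bkand,\bkor})$, so Jobe's theorem yields functional completeness at once.

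For canonical completeness I would first note that the logic is canonically suitable: $\bkand$ is the meet ($\minop$) and $\bkor$ the join ($\maxop$) of the lattice $\fourk$, and both lie in the operator set (alternatively, cite Proposition~\ref{pro:canonically-suitable-operators-for-belnap} together with the identity $\bnegk=\btran{\top}$). It then remains to put every unary selection operator in normal form. The key observation is that, written as a tuple, $\selop{a}{j}$ is exactly the function $\phi_j^a$ appearing in the proof of Lemma~\ref{lem:any-tuple-1234} — a single non-$\bot$ entry $j$ in position $a$. That proof exhibits $\phi_j^a \equiv \psi_1 \bkand \psi_2 \bkand \psi_3$, where $\psi_1,\psi_2,\psi_3$ are permutations of $\four$, and by Proposition~\ref{prop:any-permutation} each $\psi_i$ is a composition of the unary operators $\btran{0},\btran{1},\btran{\top}$. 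Hence $\selop{a}{j}(x) \equiv \sharp_1 x \minop \sharp_2 x \minop \sharp_3 x$ with each $\sharp_i$ a composition of unary operators — a single-disjunct $\bigmaxop$ of a $\bigminop$ of unary operators applied directly to the variable, which is precisely the required normal form. The degenerate case $\selop{a}{\bot}$ (the constant-$\bot$ operator) is handled directly, e.g.\ as $x \minop \btran{0}x \minop \btran{1}x \minop \btran{\top}x$ with $x$ read as $\btran{0}\btran{0}x$. As every unary selection operator is then in normal form, $L(\fourk,\set{\btran{0},\btran{1},\btran{\top},\bkand,\bkor})$ is canonically complete (and this re-establishes functional completeness).

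The substance has already been done in Lemma~\ref{lem:any-tuple-1234} and Propositions~\ref{prop:any-permutation} and~\ref{pro:canonically-suitable-operators-for-belnap}, so essentially no routine calculation remains. The one point needing care — the ``main obstacle'', such as it is — is verifying that the formula coming out of the Lemma's construction really meets all three normal-form conditions: that the only binary connectives occurring are $\bkand=\minop$ and $\bkor=\maxop$; that the permutation operators are applied only to the variable and never wrap a subformula containing a binary connective; and that no $\maxop$ lies within the scope of a $\minop$. For a unary selection operator this is immediate, as the construction delivers a pure $\bigminop$ of unary operators with no $\maxop$ present at all.
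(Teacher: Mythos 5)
Your proposal is correct and follows essentially the same route as the paper: functional completeness via Lemma~\ref{lem:any-tuple-1234} together with Jobe's criterion (Theorem~\ref{thm:jobe-functional-completeness}), and canonical completeness by observing that the formulae produced in that lemma's construction use only $\bkand$/$\bkor$ as binary connectives with compositions of $\btran{0},\btran{1},\btran{\top}$ applied directly to the variable, hence are already in normal form. You are in fact slightly more careful than the paper in explicitly treating the degenerate constant-$\bot$ selection operator, which the paper's proof glosses over.
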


  \begin{proof}
  By Lemma~\ref{lem:any-tuple-1234}, it is possible to express any function $\phi : \four \rightarrow \four$ as a formula using operators from the set $\set{\btran{0}, \btran{1},\btran{\top},\bkand,\bkor}$.
  In particular, all unary selection operators can be expressed in this way.
  Hence by Theorem~\ref{thm:jobe-functional-completeness}, the set of operators $\set{\btran{0}, \btran{1},\btran{\top},\bkand,\bkor}$ is functionally complete.
  
  Moreover, all formulae constructed in the proof of Lemma~\ref{lem:any-tuple-1234} contain only the binary operators $\bkor (\maxop)$ and $\bkand (\minop)$, and unary operators defined as compositions of $\btran{0}, \btran{1}$ and $\btran{\top}$.
  Thus, by definition, the unary selection operators are in normal form.
%   Hence we conclude that the set of operators $\set{\btran{0}, \btran{1},\btran{\top},\bkand,\bkor}$ is also canonically complete.
  \end{proof}

  \begin{corollary}
   $L(\fourk,\set{\btran{0}, \btran{1},\btran{\top},\bkand})$ is functionally and canonically complete.
  \end{corollary}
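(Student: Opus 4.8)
The plan is to show that the binary operator $\bkor$ in the operator set of Theorem~\ref{thm:func-and-canon-complete} is redundant once $\btran{\top}$ and $\bkand$ are available, and then to transfer the functional and canonical completeness of $L(\fourk,\set{\btran{0}, \btran{1},\btran{\top},\bkand,\bkor})$ to the smaller logic. The key observation is that $\btran{\top}$ is exactly the conflation operator $\bnegk$ (noted under Figure~\ref{fig:transpositions-btran}), so the equivalence recorded after Proposition~\ref{pro:canonically-suitable-operators-for-belnap},
\[ d \bkor d' \equiv \bnegk(\bnegk d \bkand \bnegk d') = \btran{\top}(\btran{\top} d \bkand \btran{\top} d'), \]
lets us rewrite every occurrence of $\bkor$ in a formula of $L(\fourk,\set{\btran{0}, \btran{1},\btran{\top},\bkand,\bkor})$ using only $\btran{\top}$ and $\bkand$. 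Hence the two logics define the same class of functions.

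First I would establish canonical suitability of $L(\fourk,\set{\btran{0}, \btran{1},\btran{\top},\bkand})$: the operator $\minop$ is $\bkand$, which computes $\inf$ in $\fourk$, and $\maxop$ is the derived binary operator $(d,d') \mapsto \btran{\top}(\btran{\top} d \bkand \btran{\top} d')$, which by the displayed identity computes $\sup$. Functional completeness is then immediate: by Theorem~\ref{thm:func-and-canon-complete} every function $\four^n \to \four$ is expressible in $L(\fourk,\set{\btran{0}, \btran{1},\btran{\top},\bkand,\bkor})$, and substituting for each $\bkor$ as above yields an equivalent formula of $L(\fourk,\set{\btran{0}, \btran{1},\btran{\top},\bkand})$.

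For canonical completeness I would appeal to the fact that the definition of normal form is phrased purely in terms of the abstract lattice operations $\maxop = \sup$ and $\minop = \inf$ together with compositions of the primitive unary operators, and does not demand that $\maxop$ or $\minop$ be primitive operator symbols of the logic. By Theorem~\ref{thm:func-and-canon-complete}, every unary selection operator of $\fourk$ has a normal-form expression $\bigmaxop_i \bigminop_j \sharp_{i,j} x$ with $\maxop = \bkor$, $\minop = \bkand$, and each $\sharp_{i,j}$ a composition of $\btran{0}, \btran{1}, \btran{\top}$. Since $\maxop$ (now as a derived operator), $\minop$, and all the relevant $\sharp_{i,j}$ are still available in $L(\fourk,\set{\btran{0}, \btran{1},\btran{\top},\bkand})$, this same formula witnesses canonical completeness.

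The step that requires the most care is the canonical-completeness argument, not the functional-completeness one. One must resist the temptation to expand $\btran{\top}(\btran{\top}(\cdot)\bkand\btran{\top}(\cdot))$ in place of $\bkor$ inside an already normal-form formula — doing so would put the binary operator $\bkand$ in the scope of a unary operator and violate the syntactic conditions on normal form. The correct move is to treat $\maxop$ as the designated (derived) supremum operator throughout, so that the normal forms produced by Theorem~\ref{thm:func-and-canon-complete} remain normal forms verbatim in the smaller logic; confirming that the definitions genuinely support this reading is the crux of the proof.
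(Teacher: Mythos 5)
Your proposal is correct and takes essentially the same route as the paper: the paper's proof is exactly the observation that $\btran{\top}$ coincides with conflation, so $d \bkor d' \equiv \btran{\top}(\btran{\top} d \bkand \btran{\top} d')$, whence the reduced set is canonically suitable and Theorem~\ref{thm:func-and-canon-complete} applies. The extra care you take over the normal-form clause --- reading $\maxop$ as the \emph{designated derived} supremum formula rather than literally expanding it into $\btran{\top}(\cdot \bkand \cdot)$ --- is the reading the paper intends but leaves implicit (compare Jobe's logic $J$ in Appendix~\ref{app:jobe-3-val}, where $\maxop$ is likewise a derived operator), so you have filled in a genuine but minor gap rather than diverged from the argument.
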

  
  \begin{proof}
   The conflation operator $\bnegk$ and $\btran{\top}$ are identical.
   Hence \[ d \bkor d' \equiv \bnegk( \bnegk d \bkand \bnegk d')  \equiv \btran{\top}( \btran{\top} d \bkand \btran{\top} d'). \]
   Therefore, the set of operators is canonically suitable, and, by Theorem~\ref{thm:func-and-canon-complete}, it is functionally and canonically complete (since we can construct $\bkor$).
  \end{proof}

  \begin{corollary}\label{cor:minimal-set-of-operators-for-canonically-complete-4-valued-logic}
   Let $\diamond$ be the unary operator corresponding to the permutation given by the cycle $(\bot\, 0\, 1\, \top)$.
   Then $L(\fourk,\set{\btran{\top},\diamond,\bkand})$ is functionally and canonically complete.
  \end{corollary}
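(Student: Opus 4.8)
The plan is to reduce the claim to the previous corollary, which established that $L(\fourk,\set{\btran{0},\btran{1},\btran{\top},\bkand})$ is functionally and canonically complete. It therefore suffices to show that the two ``missing'' transpositions $\btran{0}$ and $\btran{1}$ can be recovered as compositions of $\btran{\top}$ and $\diamond$, since then every formula available in the larger operator set is also available, after rewriting, in $L(\fourk,\set{\btran{\top},\diamond,\bkand})$; in particular the normal-form constructions for the unary selection operators survive verbatim (they only ever use $\bkand$, $\bkor$, and unary operators, and $\bkor$ is in turn expressible via $\btran{\top}$ and $\bkand$ as in the previous corollary).

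The key group-theoretic input is Theorem~\ref{thm:ab-arb-transp-gen-set}: identifying $\four$ with $\set{1,2,3,4}$ via $\bot\mapsto 1,\ 0\mapsto 2,\ 1\mapsto 3,\ \top\mapsto 4$, the operator $\btran{\top}$ is the transposition $\transp{1}{4}=(1\,4)$ and $\diamond$ is the 4-cycle $(1\,2\,3\,4)$. Here $a=1$, $b=4$, $n=4$, and $\gcd(b-a,n)=\gcd(3,4)=1$, so by Theorem~\ref{thm:ab-arb-transp-gen-set} the pair $\set{\transp{1}{4},(1\,2\,3\,4)}$ generates $S_4$. Hence every permutation of $\four$ — in particular $\btran{0}=(\bot\,0)$ and $\btran{1}=(\bot\,1)$, and more generally every unary operator obtained as a composition of $\btran{0},\btran{1},\btran{\top}$ used in Lemma~\ref{lem:any-tuple-1234} — is expressible as a composition of $\btran{\top}$ and $\diamond$.

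With this in hand the argument closes quickly. By the previous corollary, $\bkor$ is expressible from $\btran{\top}$ and $\bkand$, so $L(\fourk,\set{\btran{\top},\diamond,\bkand})$ is canonically suitable. Re-examining the proof of Lemma~\ref{lem:any-tuple-1234} (via Theorem~\ref{thm:func-and-canon-complete}), every unary selection operator is written in normal form as a $\maxop$-of-$\minop$s of unary operators, each unary operator being a composition of $\btran{0},\btran{1},\btran{\top}$; substituting the expressions for these transpositions in terms of $\btran{\top}$ and $\diamond$ leaves the formula in normal form (a composition of unary operators is still a single unary operator, and no new binary operators are introduced). Thus every unary selection operator is in normal form over $\set{\btran{\top},\diamond,\bkand}$, and by Theorem~\ref{thm:jobe-functional-completeness} the logic is functionally complete; hence it is both functionally and canonically complete.

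The only real obstacle is verifying the hypothesis of Theorem~\ref{thm:ab-arb-transp-gen-set}, i.e.\ fixing the correct correspondence between $\four$ and $\set{1,2,3,4}$ so that $\diamond$ really is the full 4-cycle $(1\,2\,3\,4)$ and $\btran{\top}$ really is $\transp{1}{4}$ with $\gcd(b-a,n)=1$; once the indexing $\bot\mapsto 1,\ 0\mapsto 2,\ 1\mapsto 3,\ \top\mapsto 4$ is chosen this is immediate, and everything else is bookkeeping that reuses the earlier proofs.
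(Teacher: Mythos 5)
Your proposal is correct and follows essentially the same route as the paper: invoke Theorem~\ref{thm:ab-arb-transp-gen-set} (with the identification $\bot\mapsto 1$, $0\mapsto 2$, $1\mapsto 3$, $\top\mapsto 4$, so that $\gcd(3,4)=1$) to conclude that $\btran{\top}$ and $\diamond$ generate all of $S_{\four}$, and then reuse Lemma~\ref{lem:any-tuple-1234} and Theorem~\ref{thm:func-and-canon-complete}. Your write-up is simply more explicit than the paper's about checking the gcd hypothesis and about why substituting compositions of unary operators preserves normal form.
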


\begin{proof}
% The operator $\btran{\top}$ is the transposition $\transp{\bot}{\top}$ and $\bcyc$ is the cycle $(\bot\, 0\, 1\, \top)$.
By Theorem~\ref{thm:ab-arb-transp-gen-set}, $\btran{\top}$ and $\bcyc$ generate all permutations in $S_\four$.
The remainder of the proof follows immediately from Lemma~\ref{lem:any-tuple-1234} and Theorem~\ref{thm:func-and-canon-complete}.
\end{proof}

% 
% \begin{figure}[ht]\centering
%    \[
%     \begin{array}{c|c}
%      x  & \bcyc x \\
%     \hline
%      \bot & 0 \\
%      0  & 1 \\
%      1  & \top \\
%      \top & \bot
%     \end{array}
%    \]
% \caption{$\bcyc$}\label{fig:bcyc-and-btran}
% \end{figure}

It is important to note that we could choose any transposition $\transp{a}{b}$, such that $\mathsf{gcd}(b-a,n) = 1$.
We specifically selected the transposition $\transp{\bot}{\top}$, as this has the effect of reversing the minimum and maximum knowledge values.
Another choice for this transposition is one which swaps 0 and 1, specifically the transposition $\transp{0}{1}$.
This transposition is the truth negation operator $\bnegt$, which in the context of access control is a useful operator, since it swaps allow and deny decisions.

\subsection{Unary Operators for Totally Ordered\\ Logics}\label{sec:unary-ops-totally-ordered-logics}

Having shown the construction for a canonically complete 4-valued logic, in which the set of logical values forms a lattice, we briefly return to totally ordered logics.
We construct a totally ordered, canonically complete $m$-valued logic (thus extending the work of Jobe, who only showed how to construct a canonically complete $3$-valued logic).

Let $V$ be a totally ordered set of $m$ truth values, $\set{1,\dots,m}$, with $1 < \dots  < m$.
We define two unary operators $\mlogicneg$ and $\mlogiccycle$, which are the transposition $(1\, m)$ and the cycle $(1\,2\,\dots\,m)$, respectively.
In addition, we define one binary operator $\mand$, where $x \mand y = \max\set{x,y}$.
% 
% \begin{figure}[ht]\centering
%    \[
%     \begin{array}{c|c|c}
%      x & \mlogicneg x & \mlogiccycle x  \\
%     \hline
%      1 & m & 2 \\
%      2 & 2 & 3 \\
%      \vdots & \vdots & \vdots\\
%      m-1 & m-1 & m \\
%      m & 1 & 1
%     \end{array}
%       \qquad
%      \begin{array}{c|ccccc}
%       \mand & 1 & 2 & \dots  & m-1 & m \\
%      \hline
%       1 & 1 & 1 & \cdots  & 1 & 1 \\
%       2 & 1 & 2 & \cdots & 2 & 2 \\
%       \vdots & \vdots & \vdots & \cdots  & \vdots & \vdots \\
%       m-1 & 1 & 2 & \cdots  & m-1 & m-1 \\
%       m & 1 & 2 & \cdots  & m-1 & m
%      \end{array}
%    \]
% \caption{$\mlogicneg,\mlogiccycle$ and $\mand$}\label{fig:totally-ordered-m-value-ops}
% \end{figure}

\begin{proposition}
\label{prop:any-permutation-total-ord}
Any permutation on $V$ can be expressed using only operators from the set $\set{\mlogicneg,\mlogiccycle}$.
\end{proposition}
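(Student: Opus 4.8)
The plan is to obtain this as an immediate corollary of Theorem~\ref{thm:ab-arb-transp-gen-set}, exactly as in the proof of Corollary~\ref{cor:minimal-set-of-operators-for-canonically-complete-4-valued-logic}. By construction $\mlogicneg$ is the unary operator induced by the transposition $\transp{1}{m}$ and $\mlogiccycle$ is the one induced by the cycle $\cycle{1}{2}{m}$, both regarded as elements of the symmetric group $S_V = S_m$ (here $V = \set{1,\dots,m}$, so there is no relabelling to perform). Composition of unary operators on $V$ mirrors composition in $S_m$ exactly, so a permutation $\pi$ is expressible as a formula over $\set{\mlogicneg,\mlogiccycle}$ precisely when $\pi$ lies in the subgroup of $S_m$ generated by $\transp{1}{m}$ and $\cycle{1}{2}{m}$. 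It therefore suffices to show that these two permutations generate all of $S_m$.

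The key step is to invoke Theorem~\ref{thm:ab-arb-transp-gen-set} with $n = m$, $a = 1$ and $b = m$. Its hypothesis requires $\gcd(b-a,n) = \gcd(m-1,m) = 1$, which holds for every $m \geqslant 2$ since consecutive integers are coprime (any common divisor of $m-1$ and $m$ divides their difference $1$); the case $m = 1$ is vacuous because $S_1$ is trivial. Hence $\transp{1}{m}$ and $\cycle{1}{2}{m}$ generate $S_m$, and by the remark above every permutation on $V$ can be written using only $\mlogicneg$ and $\mlogiccycle$.

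I expect no real obstacle here: the entire content is the trivial arithmetic fact $\gcd(m-1,m) = 1$ together with the observation that $\mlogicneg$ and $\mlogiccycle$ are exactly the pair of generators named in Theorem~\ref{thm:ab-arb-transp-gen-set}. Should one prefer not to rely on the two-generator result, an alternative proof uses Theorem~\ref{thm:n-1-transpositions}: conjugating $\mlogicneg = \transp{1}{m}$ by the successive powers of $\mlogiccycle$ produces the adjacent transpositions $\transp{1}{2},\transp{2}{3},\dots,\transp{m-1}{m}$, from which $\transp{1}{2},\transp{1}{3},\dots,\transp{1}{m}$, and hence all of $S_m$, can be recovered — but this is strictly more work than the one-line argument above, and it is cleaner to cite Theorem~\ref{thm:ab-arb-transp-gen-set} directly.
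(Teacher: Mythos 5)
Your proof is correct and follows essentially the same route as the paper's: identify $\mlogicneg$ and $\mlogiccycle$ with the transposition $\transp{1}{m}$ and the cycle $\cycle{1}{2}{m}$ and invoke Theorem~\ref{thm:ab-arb-transp-gen-set}. You are in fact slightly more careful than the paper, which does not explicitly verify the hypothesis $\gcd(m-1,m)=1$.
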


\begin{proof}
The operator $\mlogicneg$ is the transposition $(1\, m)$ and the operator $\mlogiccycle$ is the cycle $(1\,2\,\dots\,m)$.
By Theorem~\ref{thm:ab-arb-transp-gen-set}, these operators generate all the permutations in $S_V$.
\end{proof}

\begin{proposition}
$L(V,\set{\mlogicneg,\mlogiccycle,\mand})$ is canonically suitable.
\end{proposition}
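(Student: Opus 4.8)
The plan is to take $\maxf := \mand$ directly, since $x \mand y = \max\set{x,y}$ by definition, so the ``join'' half of canonical suitability is immediate and all the work goes into constructing $\minf$. For that I would use a De Morgan–style encoding. Let $\rho$ be the order-reversing permutation of $V$, that is, $\rho(i) = m+1-i$. First I would record two elementary facts: $\rho$ is an involution ($\rho(\rho(i)) = i$) and $\rho$ is order-reversing ($i < j$ implies $\rho(i) > \rho(j)$). From order-reversal it follows that for all $x,y \in V$ we have $\max\set{\rho(x),\rho(y)} = \rho(\min\set{x,y})$; applying $\rho$ to both sides and using that it is an involution gives $\rho(\max\set{\rho(x),\rho(y)}) = \min\set{x,y}$. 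Hence, provided $\rho$ is available as a formula of the logic, the arity-$2$ formula $\minf(x,y) := \rho(\rho(x) \mand \rho(y))$ has truth table $\min\set{x,y}$, and together with $\maxf = \mand$ this witnesses that $L(V,\set{\mlogicneg,\mlogiccycle,\mand})$ is canonically suitable, with $\minop$ corresponding to $\minf$ and $\maxop$ to $\mand$.

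It then remains only to observe that $\rho$ is expressible using the available unary operators. Since $\rho$ is a permutation of $V$, Proposition~\ref{prop:any-permutation-total-ord} shows that $\rho$ equals some composition of $\mlogicneg$ and $\mlogiccycle$; substituting that composition for each occurrence of $\rho$ in the expression for $\minf$ above yields a genuine formula in $\Phi(L)$ with the required behaviour. This completes the argument.

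There is essentially no real obstacle here: the only points needing a little care are the direction of the De Morgan identity — namely that conjugating $\max$ by an order-reversing involution yields $\min$ — and the appeal to Proposition~\ref{prop:any-permutation-total-ord}, which itself rests on Theorem~\ref{thm:ab-arb-transp-gen-set} applied with $a = 1$, $b = m$ (valid because $\gcd(m-1,m) = 1$ for consecutive integers). If one wished to avoid the black-box appeal, one could instead exhibit $\rho$ explicitly as a short word in $\mlogicneg$ and $\mlogiccycle$ — for instance as the product of the transpositions $(i\ \, m{+}1{-}i)$ for $1 \leqslant i < m{+}1{-}i$, each of which is a conjugate of $\mlogicneg = (1\ m)$ by a suitable power of $\mlogiccycle$ — but citing the earlier proposition keeps the proof short.
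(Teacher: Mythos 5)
Your proof is correct and takes essentially the same route as the paper: both arguments take one of the two lattice operations directly from $\mand$ and obtain the other by De Morgan conjugation with the order-reversing permutation $i \mapsto m-i+1$, whose expressibility as a word in $\mlogicneg$ and $\mlogiccycle$ is supplied by Proposition~\ref{prop:any-permutation-total-ord}. The only (cosmetic) difference is the direction: you read $\mand$ as $\max$, consistent with its stated definition, whereas the paper's proof treats it as $\minop$ and conjugates to get $\maxop$ — an apparent slip in the paper rather than a substantive divergence.
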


\begin{proof}
Clearly $x \minop y \equiv x \mand y$, it remains to show the operator $\maxop$ can be expressed in $L$. 
By Proposition~\ref{prop:any-permutation-total-ord} we can express any permutation of $V$ in terms of $\mlogicneg$ and $\mlogiccycle$.
In particular, we can express the permutation $f$, where $f(i) = m-i+1$, which swaps the values $1$ and $m$, $2$ and $m-1$, and so on.
We denote the unary operator which realizes this permutation by $\mrev$.
Then $x \maxop y \equiv x \mor y \equiv \mrev(\mrev x \mand \mrev y)$.
\end{proof}

\begin{theorem}
$L(V,\set{\mlogicneg,\mlogiccycle,\mand}$ is functionally and canonically complete.
\end{theorem}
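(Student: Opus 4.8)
The plan is to follow the proof of Theorem~\ref{thm:func-and-canon-complete} almost verbatim, with the pair $\mlogicneg,\mlogiccycle$ playing the role of the three transpositions $\btran{0},\btran{1},\btran{\top}$ and the $m$-element chain $V$ playing the role of $\fourk$. We have already shown that $L(V,\set{\mlogicneg,\mlogiccycle,\mand})$ is canonically suitable, so $\minop$ is realized by $\mand$ and $\maxop$ by $\mrev(\mrev x \mand \mrev y)$ (with $\mrev$ the reversal permutation from that proof). It therefore suffices to prove the $m$-valued analogue of Lemma~\ref{lem:any-tuple-1234}: every unary function $\phi\colon V \to V$ is equivalent to a formula whose only binary operators are $\minop$ and $\maxop$ and whose unary operators are compositions of $\mlogicneg$ and $\mlogiccycle$. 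Granting this, functional completeness is immediate from Jobe's Theorem~\ref{thm:jobe-functional-completeness} (each unary selection operator is such a $\phi$), and canonical completeness follows because the formulas produced are, by inspection, in normal form.

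To prove the analogue of Lemma~\ref{lem:any-tuple-1234} I would argue as follows. Write $\phi$ as the tuple $(\phi(1),\dots,\phi(m))$, let $\xmin$ be the minimum of $V$, and for $a,i \in V$ let $\self{a}{i}$ denote the function returning $a$ at input $i$ and $\xmin$ at every other input; thus $\selop{i}{a} = \self{a}{i}$. Then
\[
 \phi \equiv \bigmaxop_{i=1}^{m} \self{\phi(i)}{i},
\]
where any term with $\phi(i)=\xmin$ may be discarded since $\xmin$ is the identity for $\maxop$ (and if \emph{every} $\phi(i)=\xmin$ then $\phi$ is the constant map $\xmin$, which equals $\bigminop_{k=1}^{m}\mlogiccycle^{k}(x)$ because the values $\mlogiccycle^{1}(x),\dots,\mlogiccycle^{m}(x)$ exhaust $V$). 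For each remaining $i$, putting $a := \phi(i) \ne \xmin$, I would write
\[
 \self{a}{i} \equiv \bigminop_{j \ne i} \pi_{i,j},
\]
where $\pi_{i,j}$ is any permutation of $V$ with $\pi_{i,j}(i)=a$ and $\pi_{i,j}(j)=\xmin$ --- such a permutation exists since $i \ne j$ and $a \ne \xmin$ make these two input--output assignments consistent, and it is expressible using $\mlogicneg,\mlogiccycle$ by Proposition~\ref{prop:any-permutation-total-ord}. Checking the two displays: at input $i$ every $\pi_{i,j}$ returns $a$, so the inner meet is $a$ and (all other summands being $\xmin$) the outer join is $\phi(i)$; at any input $j \ne i$ the term $\pi_{i,j}$ returns $\xmin$, so the inner meet is $\xmin$. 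Hence $\phi$ is represented correctly, and --- being a join of meets of composed unary operators --- the representation is in normal form.

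Specializing to $\phi = \selop{a}{j}$ with $j \ne \xmin$ then shows that every unary selection operator is equivalent to a normal-form formula of $L(V,\set{\mlogicneg,\mlogiccycle,\mand})$; together with the explicit formula above for the constant $\xmin$, this yields functional completeness via Theorem~\ref{thm:jobe-functional-completeness} and canonical completeness by the normal-form property of the constructed formulas, as required.

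The only steps needing care lie in the permutation argument. One must note that a permutation of $V$ realizing a prescribed pair of distinct input--output assignments always exists --- elementary, but the construction genuinely rests on it --- and one must dispatch the degenerate cases: functions (and selection operators) with output $\xmin$, handled by the meet-of-shifts formula above, and small $m$, in particular $m=2$, where $\set{j \ne i}$ is a singleton so the inner meet is trivial. I expect these to be the main, though shallow, obstacles; everything else is a direct transcription of the $4$-valued proof of Theorem~\ref{thm:func-and-canon-complete}.
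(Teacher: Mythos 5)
Your proof is correct and follows exactly the route the paper intends: the paper omits this proof, stating only that it ``proceeds in an analogous manner'' to Lemma~\ref{lem:any-tuple-1234} and Theorem~\ref{thm:func-and-canon-complete}, and your argument is a faithful transcription of that analogy (join of meets of permutations realizing prescribed input--output pairs, with Proposition~\ref{prop:any-permutation-total-ord} supplying the permutations). Your explicit treatment of the degenerate constant-$\xmin$ case is a small tidying-up of a point the $4$-valued argument glosses over, not a different approach.
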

% 
% \begin{proof}
% Using the same construction as that found in Lemma~\ref{lem:any-tuple-1234} (replacing $\bkand$ ($\bkor$) with $\mand$ ($\mor$)) it is possible to express any function $\phi : V \rightarrow V$ as a formula using operators from the set $\set{\mlogicneg,\mlogiccycle,\mand}$.
% In particular, all unary selection operators can be expressed in this way.
% Hence by Theorem~\ref{thm:jobe-functional-completeness}, the set of operators $\set{\mlogicneg,\mlogiccycle,\mand}$ is functionally complete.
%   
%   Moreover, all formulae constructed contain only the binary operators $\mand (\maxop)$ and $\mor (\minop)$, and unary operators defined as compositions of $\mlogicneg$ and $\mlogiccycle$.
%   Thus, by definition, the unary selection operators are in normal form.
% \end{proof}
% 
We omit the proof, as it proceeds in an analogous manner to those for Lemma~\ref{lem:any-tuple-1234} and Theorem~\ref{thm:func-and-canon-complete}.
It is interesting to note that we have constructed a canonically complete $m$-valued logic which uses only two unary operators.
This is somewhat unexpected; intuition would suggest that $m-1$ unary operators are required for a canonically complete $m$-valued logic.

\section{A Canonically Complete\\ 4-valued ABAC Language}
\label{sec:canonically-complete-abac-lang-4val}

Having identified a canonically complete set of operators for Belnap logic, we now investigate how this set of operators can be used in an ABAC language, and consider the advantages in doing so.
Crampton and Williams~\cite{crampton:decision16} showed the operators in \ptacl can be replaced with an alternative set of operators, taken from Jobe's logic $E$, to obtain a canonically complete $3$-valued ABAC language.
In the remainder of this section, we describe a 4-valued lattice-ordered version of \ptacl, based on the lattice \mbox{$\fourk$}, which we denote by $\ptacll{\four}$.
% With this ordering, the decisions corresponding to allow ($1$) and deny ($0$) are incomparable; both these decisions are greater than not-applicable ($\bot$); and both decisions are less than conflict ($\top$).

\subsection{The Decision Set}

We first reiterate there is value in having an ABAC language for which policy evaluation can return a fourth value $\top$.
Such a value is used in both XACML and PBel, although its use in XACML is somewhat ad hoc and confusing since it can be used to indicate%
\begin{inparaenum}[(a)]
 \item an error in policy evaluation, or 
 \item a decision that arises for a particular operator during normal policy evaluation. 
\end{inparaenum}

We will use this fourth value to denote that (normal) policy evaluation has led to conflicting decisions (and we do not wish to use deny-overrides or similar operators to resolve the conflict at this point in the evaluation).
(We explain how we handle indeterminacy arising from errors in policy evaluation in Section~\ref{sec:indeterminacy}.)
Two specific operators, ``only-one-applicable'' (\textsf{ooa}) and ``unanimity'' (\textsf{un}) could make use of $\top$:
the \textsf{ooa} operator returns the value of the applicable sub-policy if there is only one such policy, and $\top$ otherwise; whereas the \textsf{un} operator returns $\top$ if the sub-policies return different decisions, and the common decision otherwise.
% is an operator used to ensure that all policies agree, otherwise a conflict of decisions is returned.
The decision tables for these operators are shown in Figure~\ref{fig:ooa-unanimity}.
 
\begin{figure}[!h]
\begin{subfigure}[b]{.45\columnwidth}\centering
 \[
  \begin{array}{c|cccc}
     \mathsf{ooa} & \bot & 0 & 1 & \top \\
   \hline
    \bot & \bot & 0 & 1 & \top \\
    0 &  0 & \top & \top & \top \\
    1 & 1 & \top & \top & \top \\
    \top & \top & \top & \top & \top \\
  \end{array}
 \]
%  \caption{$\mathsf{ooa}$}\label{fig:only-one-applicable}
\end{subfigure}
\hfill
\begin{subfigure}[b]{.45\columnwidth}\centering
 \[
  \begin{array}{c|cccc}
     \mathsf{un} & \bot & 0 & 1 & \top \\
   \hline
    \bot & \bot & \top & \top & \top \\
    0 & \top & 0 & \top & \top \\
    1 & \top & \top & 1 & \top \\
    \top & \top & \top & \top & \top \\
  \end{array}
 \]
%  \caption{$\mathsf{un}$}\label{fig:unanimity}
\end{subfigure}
\caption{Operators using $\top$}\label{fig:ooa-unanimity}
\end{figure}  

In establishing canonical completeness for \ptacle, Crampton and Williams assumed a total order on the set of decisions ($0 < \bot < 1$).
This ordering does not really reflect the intuition behind the use of $0$, $1$ and $\bot$ in ABAC languages.
In the context of access control, $0$ and $1$ are incomparable conclusive decisions, and $\bot$ and $\top$ are decisions that reflect the inability to reach a conclusive decision either because a policy or its sub-policies are inapplicable ($\bot$) or because a policy's sub-policies return conclusive decisions that are incompatible in some sense ($\top$).
Moreover, we can subsequently resolve $\bot$ and $\top$ into one of two (incomparable) conclusive decisions using unary operators such as ``deny-by-default'' and ``allow-by-default''.
(The truth-based ordering on $\four$ does not correspond nearly so well to the above intuitions.)

\subsection{Operators and Policies}\label{sec:ptacl4-operators-policies}

We define the set of operators for $\ptacll{\four}$ to be $\set{\btran{\top},\bcyc,\bkand}$, which we established is canonically complete in Corollary~\ref{cor:minimal-set-of-operators-for-canonically-complete-4-valued-logic}.
Recall that $\btran{\top}$ is equivalent to conflation $-$; we will use the simpler notation $-$ in the remainder of this section.
% $\ptacll{4}$ defines atomic policies and policies in a similar manner as \ptacl: the only change is the addition of $\top$ to the decision set $D$.
An atomic policy has the form $(t,d)$, where $t$ is a target and $d \in \set{0,1}$.
% Note the absence of the conflict decision $\top$ from atomic policies.
(There is no reason for an atomic policy to return $\top$ -- which signifies a conflict has taken place -- in an atomic policy.)
Then we have the following policy semantics.
  \begin{align*}
   \semp{t,p}{q} &= %
    \begin{cases}
     \semp{p}{q} & \text{if $\semt{t}{q} = 1$}, \\
     \bot & \text{otherwise};
    \end{cases} \\
   \semp{d}{q} &= d; \\
   \semp{-p}{q} &= -\semp{p}{q}; \quad \semp{\bcyc p}{q} = \bcyc\semp{p}{q}; \\
   \semp{p \bkand p'}{q} &= \semp{p}{q} \bkand \semp{p'}{q}.
  \end{align*}

We now show how to represent the operator only-one-applicable ($\mathsf{ooa}$) in normal form.
(Recall that it is possible to represent this operator as a formula in PBel; however, it is non-trivial to derive such a formula.)
Using the truth table in Figure~\ref{fig:ooa-unanimity} and by definition of the selection operators and $\maxop$, we have \mbox{$x \ooa y$} is equivalent to
\begin{align*}
  &\selop{(\bot,\bot)}{\bot}(x,y) \maxop \selop{(\bot,0)}{0}(x,y) \maxop \selop{(\bot,1)}{1}(x,y) \maxop \selop{(\bot,\top)}{\top}(x,y) \maxop \\
  & \selop{(0,\bot)}{0}(x,y) \maxop \selop{(0,0)}{\top}(x,y) \maxop \selop{(0,1)}{\top}(x,y) \maxop \selop{(0,\top)}{\top}(x,y) \maxop \\
  & \selop{(1,\bot)}{1}(x,y) \maxop \selop{(1,0)}{\top}(x,y) \maxop \selop{(1,1)}{\top}(x,y) \maxop \selop{(1,\top)}{\top}(x,y) \maxop \\
  & \selop{(\top,\bot)}{\top}(x,y) \maxop \selop{(\top,0)}{\top}(x,y) \maxop \selop{(\top,1)}{\top}(x,y) \maxop \selop{(\top,\top)}{\top}(x,y).
\end{align*}
Moreover, $\selop{(x,y)}{z} = \selop{x}{z} \minop \selop{y}{z}$ and $\selop{x}{y}$ is a function $\phi : \four \rightarrow \four$, which can be represented as a composition of unary operators.
Hence, we can derive a formula  in normal form for $\ooa$.

Functional completeness implies we can write any binary operator (such as XACML's deny-overrides policy-combining algorithm) as a formula in $L(\fourk,\set{-,\bcyc,\bkand}$, and hence we can use any operator we wish in $\ptacll{\four}$ policies.
However, canonical completeness and the decision set $(\four,\kord)$ allows for a completely different approach to constructing ABAC policies.
Suppose a policy administrator has identified three sub-policies $p_1$, $p_2$ and $p_3$ and wishes to define an overall policy $p$ in terms of the decisions obtained by evaluating these sub-policies.
Then the policy administrator can tabulate the desired decision for all relevant combinations of decisions for the sub-policies, as shown in the table below.
The default decision is to return $\bot$, indicating that $p$ is ``silent'' for other combinations.
\[
 \begin{array}{ccc|c}
  p_1 & p_2 & p_3 & p \\
 \hline
  \bot & 0 & 0 & 0 \\
  0 & 0 & 0 & 0 \\
  1 & 0 & 0 & \top \\
  1 & 1 & 0 & 1 \\
  1 & 1 & 1 & 1 \\
 \end{array}
\]
Then, treating $p$ as a function of its sub-policies, we have
\[
 p \equiv\ \selop{(\bot,0,0)}{0} \maxop \selop{(0,0,0)}{0} \maxop \selop{(1,0,0)}{\top} \maxop \selop{(1,1,0)}{1} \maxop \selop{(1,1,1)}{1}. 
\]
The construction of $p$ as a ``disjunction'' ($\maxop$) of selection operators ensures that the correct value is returned for each combination of values (in much the same as disjunctive normal form may be used to represent the rows in a truth table).
Note that if $p_1$, $p_2$ and $p_3$ evaluate to a tuple of values other than one of the rows in the table, each of the selection operators will return $\bot$ and thus $p$ will evaluate to $\bot$.
Each operator of the form $\selop{(a,b,c)}{d}$ can be represented as the ``conjunction'' ($\minop$) of unary selection operators (specifically $\selop{a}{d} \minop \selop{b}{d} \minop \selop{c}{d}$).

Of course, one would not usually construct the normal form by hand, as we have done above.
Indeed, we have developed an algorithm which takes an arbitrary policy expressed as a decision table as input, and outputs the equivalent normal form expressed in terms of the operators $\set{-,\bcyc,\bkand}$.
In order to develop this algorithm, we also derived expressions for the unary selection operators in terms of the operators $\set{-,\bcyc,\bkand}$.
(In Lemma~\ref{lem:any-tuple-1234} we only showed that such expressions exist.)
Our implementation of the algorithm, comprising just less than $150$ lines of Python code, shows the ease with which construction of policies can be both automated and simplified, utilizing the numerous advantages that have been discussed throughout this paper.%
% We have used our algorithm to generate policies for decision tables containing up $20$ policies.
\footnote{Code and test results available at \url{goo.gl/0TM0RD}.}

%We have not explicitly derived expressions for the unary selection operators in terms of the operators $\set{-,\bcyc,\bkand}$.
%Instead, we show in Lemma~\ref{lem:any-tuple-1234} that the construction of the unary selection operators is possible.
%For a practical implementation, it would necessary to derive specific formulae for the unary selection operators. 
%
%Of course, one would not usually construct the normal form by hand, as we have done above.
%Indeed, it is easy to develop an algorithm that would construct the normal form of a policy from its decision table and encodings of the selection operators using unary operators.
%In short, we have identified the formal foundations for a policy authorization language in which we can automate the construction of a policy, given the decision table for that policy.
\vfill
\subsection{Indeterminacy}\label{sec:indeterminacy}

XACML uses the indeterminate value in two distinct ways:
\begin{enumerate}
 \item as a decision returned (during normal evaluation) by the ``only-one-applicable'' policy-combining algorithm; and
 \item as a decision returned when some (unexpected) error has occurred in policy evaluation has occurred.
\end{enumerate}
In the second case, the indeterminate value is used to represent alternative outcomes of policy evaluation (had the error not occurred).
We believe that the two situations described are quite distinct and require different policy semantics.
However, the semantics of indeterminacy in XACML are confused because%
\begin{inparaenum}[(i)]
 \item the indeterminate value is used in two different ways, as described above, and
 \item there is no clear and uniform way of establishing the values returned by the combining algorithms when an indeterminate value is encountered.
\end{inparaenum}

We have seen how $\top$ may be used to represent decisions for operators such as $\ooa$ and $\mathsf{un}$.
% $\ptacll{\four}$ may be extended in much the same way that \ptacl handles errors during policy evaluation.
% In principle, this extension is similar in spirit to the way in which XACML 3.0 handles the indeterminate value.
% 
% In $\ptacll{\four}$ we do not use a decision to indicate policy evaluation failure.
% By using the decision set $\four$, we no longer specify an indeterminate decision, which is currently used in XACML to handle instances where target evaluation fails.
We handle errors in target evaluation (and thus indeterminacy) using sets of possible decisions~\cite{huth:auth10,crampton:ptacl12,li:access09}.
(This approach was adopted in a rather ad hoc fashion in XACML 3.0, using an extended version of the indeterminate decision.)
Informally, when target evaluation fails, denoted by $\semt{t}{q} =\ ?$, \ptacl assumes that either $\semt{t}{q} = 1$ or $\semt{t}{q} = 0$ could have been returned, and returns the union of the (sets of) decisions that would have been returned in both cases.
% Thus policy evaluation may return a set of decisions (reflecting the indeterminacy).
The formal semantics for policy evaluation in $\ptacll{\four}$ in the presence of indeterminacy are defined in Figure~\ref{fig:ptacl-indeterminacy}.

\begin{figure}[h]
  \begin{align*}
   \semp{t,p}{q} &= %
    \begin{cases}
     \semp{p}{q} & \text{if $\semt{t}{q} = 1$}, \\
     \set{\bot} & \text{if $\semt{t}{q} = 0$}, \\
     \set{\bot} \cup \semp{p}{q} & \text{if $\semt{t}{q} =\ ?$},
    \end{cases} \\
   \semp{d}{q} &= \set{d}; \\
   \semp{-p}{q} &= \set{ -d: d \in \semp{p}{q}}; \\
   \semp{\bcyc p}{q} &= \set{ \bcyc d: d \in \semp{p}{q}}; \\
   \semp{p_1 \bkand p_2}{q} &= \set{d_1 \bkand d_2 : d_i \in \semp{p_i}{q}}.
  \end{align*}
 \caption{Semantics for $\ptacll{\four}$ with indeterminacy}\label{fig:ptacl-indeterminacy}
\end{figure}

The semantics for the operators $\set{-,\bcyc,\bkand}$ operate on sets, rather than single decisions, in the natural way.
A straightforward induction on the number of operators in a policy establishes that the decision set returned by these extended semantics will be a singleton if no target evaluation errors occur; moreover, that decision will be the same as that returned by the standard semantics.

\subsection{Leveraging the XACML Architecture}

XACML is a well-known, standardized language, and many of the components and features of XACML are well-defined.
However, it has been shown that the rule- and policy-combining algorithms defined in the XACML standard suffer from some shortcomings~\cite{li:access09}, notably inconsistencies between the rule- and policy-combining algorithms.
\ptacl, on which $\ptacll{\four}$ is based, is a tree-structured ABAC language that is explicitly designed to use the same general policy structure and evaluation methods as XACML.
However, \ptacl differs substantially from XACML in terms of policy combination operators and semantics.

Thus, we suggest that $\ptacll{\four}$ operators could replace the rule- and policy-combining algorithms of XACML, while those parts of the language and architecture that seem to function well may be retained.
Specifically, we use the XACML architecture to:%
 \begin{inparaenum}[(i)]
   \item specify requests;
   \item specify targets;
   \item decide whether a policy target is applicable to a given request; and
   \item use the policy decision point to evaluate policies.
  \end{inparaenum}
In addition, we would retain the enforcement architecture of XACML, in terms of the policy decision, policy enforcement and policy administration points, and the relationships between them.

We believe it would be relatively easy to modify the XACML PDP to
\begin{itemize}
 \item handle four decisions, extending the current set of values (``allow'', ``deny'' and ``not-applicable'') to include ``conflict'';
 \item implement the policy operators $\set{-,\bcyc,\bkand}$ as custom combining algorithms; and
 \item work with decision sets, in order to handle indeterminacy in a uniform manner.
\end{itemize}
For illustrative purposes, Appendix~\ref{app:A} specifies the modified decision set and pseudocode for the operator $\bkand$ in the format used by the XACML standard.

The main difference to end-users would be in the simplicity of policy authoring.
Using standard XACML, policy authors must decide which rule- and policy-combining algorithms should be used to develop a policy or policy set that is equivalent to the desired policy.
This is error-prone and it may not even be possible to express the desired policy using only the XACML combining algorithms.
Using XACML with the policy-combining mechanisms of $\ptacll{\four}$, we can present an entirely different interface for policy authoring to the end-user.
The policy author would first specify the atomic policies (XACML rules), then combine atomic policies using decision tables to obtain more complex policies (as illustrated in Section~\ref{sec:ptacl4-operators-policies}).
Those policies can be further combined by specifying additional decision tables.
At each stage a back-end policy compiler can be used to convert those policies into policy sets (using $\ptacll{\four}$ operators) that can be evaluated by the XACML engine.

 \section{Concluding Remarks}

Attribute-based access control is of increasing importance, due to the increasing use of open, distributed, interconnected and dynamic systems.
The introduction of \emph{canonically complete} ABAC languages~\cite{crampton:decision16} provides the ability to express any desired policy in a normal form, which allows for the possibility of specifying policies in the form of a decision table and then automatically compiling them into the language.
% is relatively new, however we show the many benefits gained via the use a canonically complete language.
% In canonically complete ABAC languages, policy generation is vastly simplified, and there is support for automatic generation of policies in \emph{normal form} from decision tables.
% This negates the main one of the main difficulties in tree-structured languages; specifying policies using the operators provided by the language.
% Furthermore, policies in normal form may be more efficient to evaluate.

In this paper, we make important contributions to the understanding of canonical completeness in multi-valued logics and thus in ABAC languages.
First, we extend Jobe's work on canonical completeness to multi-valued logics to the case where the set of truth values forms a lattice.
We show that the Belnap set of operators~\cite{belnap:logic77} (and thus any subset thereof) is not canonically complete, hence any ABAC language based on these operators cannot be canonically complete.
In particular, PBel~\cite{bruns:belnap11}, probably the most well known $4$-valued ABAC language, is not canonically complete.
We introduce a new four-valued logic $L(\fourk,\set{-,\bcyc,\bkand})$ which is canonically complete, without having to explicitly construct the unary selection operators in normal form (unlike Jobe~\cite{jobe:functional62} and Crampton and Williams~\cite{crampton:decision16}).
By identifying the connection between the generators of the symmetric group and the unary operators of logics, we have developed a simple and generic method for identifying a set of unary operators that will guarantee the functional and canonical completeness of an $m$-valued lattice-based logic.
We also showed that there is a set of operators containing only three connectives which is functionally complete for Belnap logic, in contrast to the set of size four identified by Arieli and Avron~\cite{arieli:four98}.

Second, we show in $\ptacll{\four}$ how the canonically complete set of operators $\set{\bnegk,\bcyc,\bkand}$ can be used in an ABAC language, and present the advantages of doing so.
In particular, we are no longer forced to use a totally ordered set of three decisions to obtain canonical completeness (as in the case in \ptacle).
Moreover, the overall design of \ptacl and hence $\ptacll{\four}$ is compatible with the overall structure of XACML policies.
We discuss how the XACML decision set and rule-combining algorithms can be modified to support $\ptacll{\four}$.
Doing so enables us to retain the rich framework provided by XACML for ABAC (in terms of its languages for representing targets and requests) and its enforcement architecture (in terms of the policy enforcement, policy decision and policy administration points).
% We advocate a replacement of the rule- and policy-combining algorithms, along with the decision set and method for handling indeterminacy.
% The reasoning for this is two-fold.
Thus, we are able to propose an enhanced XACML framework within which any desired policy may be expressed.
Moreover, the canonical completeness of $\ptacll{\four}$, means that the desired policy may be represented in simple terms by a policy author (in the form of a decision table) and automatically compiled into a PDP-readable equivalent policy.
% First, we fix the shortcomings and functional incompleteness of the rule- and policy-combining algorithms.
% Secondly, we provide a simplified method for policy authoring which can aid the complex policy authoring process in XACML, while reducing the number of errors that occur in policies.

Our work paves the way for a considerable amount of future work.
In particular, we intend to develop a modified XACML PDP that implements the $\ptacll{\four}$ operators.
We also hope to develop a policy authoring interface in which users can simply state what decision a policy should return for particular combinations of decisions from sub-policies.
This would enable us to evaluate the usability of such an interface and compare the accuracy with which policy authors can generate policies using standard XACML combining algorithms compared with the methods that $\ptacll{\four}$ can support.

On the more technical side, we would like to revisit the notion of \emph{monotonicity}~\cite{crampton:ptacl12} in targets and how this affects policy evaluation in ABAC languages.
The definition of monotonicity is dependent on the ordering chosen for the decision set and existing work on monotonicity assumes the use of a totally ordered $3$-valued set (comprising $0$, $\bot$ and $1$).
So it will be interesting to consider how the use of a $4$-valued lattice-ordered decision set affects monotonicity.
We also intend to investigate methods of \emph{policy compression}, analogous to the minimization of Boolean functions~\cite{mccl:mini56}, where we take the canonical form of a policy (generated from a decision table) and rewrite it in such a way as to minimize the number of terms in the policy.

 \clearpage

\appendix

\section{Jobe's Canonically Complete 3-valued logic}
\label{app:jobe-3-val}

Consider the 3-valued logic $J$~\cite{jobe:functional62}, whose operators $\jand,\btran{1}$ and $\btran{2}$ are defined in Figure~\ref{table:jobelogic}.
\begin{figure}[h]\centering
   \[
    \begin{array}{c|c|c}
     x & \btran{1} x & \btran{2} x \\
    \hline
     0 & 1 & 2 \\
     1 & 0 & 1 \\
     2 & 2 & 0
    \end{array}
    \qquad
     \begin{array}{c|ccc}
      \jand & 0 & 1 & 2 \\
     \hline
      0 & 0 & 0 & 0 \\
      1 & 0 & 1 & 1 \\
      2 & 0 & 1 & 2 
     \end{array}
   \]
\caption{The operators in Jobe's logic}\label{table:jobelogic}
\end{figure}

It is easy to establish that
\[ 
x \minop y \equiv x \jand y \quad\text{and}\quad x \maxop y \equiv \btran{2}(\btran{2}(x) \jand \btran{2}(y)). 
\]  
Thus $J$ is canonically suitable~\cite[Theorem 6]{jobe:functional62}.
The normal-form formulas for the unary selection operators are shown in Figure~\ref{fig:normal-form-selection-ops-in-E}.
(Note that $\selop{i}{0}$ is the same for all $i$.)
Thus $J$ is functionally and canonically complete~\cite[Theorem 7]{jobe:functional62}.
Hence, it is possible to construct a canonically complete $3$-valued logic using the operators $\set{\jand,\sim_0,\sim_1}$.

\begin{figure}[h]\centering
 \[
  \begin{array}{|l|l|}
  \hline
   \selop{i}{0}(x) & x \jand \btran{1}(x) \jand \btran{2}(x) \\
  \hline
   \selop{0}{1}(x) & \btran{1}(x) \jand \btran{2}\btran{1}(x) \\
   \selop{1}{1}(x) & x \jand \btran{2}(x) \\
   \selop{2}{1}(x) & \btran{1}\btran{2}(x) \jand \btran{2}\btran{1}\btran{2}(x) \\
  \hline
   \selop{0}{2}(x) & \btran{2}(x) \jand \btran{1}\btran{2}(x) \\
   \selop{1}{2}(x) & \btran{2}\btran{1}(x) \jand \btran{2}\btran{1}\btran{2}(x) \\
   \selop{2}{2}(x) & x \jand \btran{1}(x) \\
  \hline
  \end{array}
 \]
 \caption{Normal forms for the unary selection operators}\label{fig:normal-form-selection-ops-in-E}
\end{figure}

\section{Operators in Belnap Logic}\label{app:belnap-ops}

\begin{figure}[th!]
\begin{subfigure}[b]{.4\columnwidth}\centering
 \[
  \begin{array}{c|cccc}
     \btand & 0 & \bot & \top & 1 \\
   \hline
    0 & 0 & 0 & 0 & 0 \\
    \bot & 0 & \bot & 0 & \bot \\
    \top & 0 & 0 & \top & \top \\
    1 & 0 & \bot & \top & 1 \\
  \end{array}
 \]
 \caption{$\btand$}\label{fig:belnap-truth-and}
\end{subfigure}
\hfill
\begin{subfigure}[b]{.4\columnwidth}\centering
 \[
  \begin{array}{c|cccc}
     \btor & 0 & \bot & \top & 1 \\
   \hline
    0 & 0 & \bot & \top & 1 \\
    \bot & \bot & \bot & 1 & 1 \\
    \top & \top & 1 & \top & 1 \\
    1 & 1 & 1 & 1 & 1 \\
  \end{array}
 \]
 \caption{$\btor$}\label{fig:belnap-truth-or}
\end{subfigure}
\begin{subfigure}[b]{.4\columnwidth}\centering
 \[
  \begin{array}{c|cccc}
     \bkand & \bot & 0 & 1 & \top \\
   \hline
    \bot & \bot & \bot & \bot & \bot \\
    0 & \bot & 0 & \bot & 0 \\
    1 & \bot & \bot & 1 & 1 \\
    \top & \bot & 0 & 1 & \top \\
  \end{array}
 \]
 \caption{$\bkand$}\label{fig:belnap-know-and}
\end{subfigure}
\hfill
\begin{subfigure}[b]{.4\columnwidth}\centering
 \[
  \begin{array}{c|cccc}
  \bkor & \bot & 0 & 1 & \top \\
   \hline
    \bot & \bot & 0 & 1 & \top \\
    0 & 0 & 0 & \top & \top \\
    1 & 1 & \top & 1 & \top \\
    \top & \top & \top & \top & \top \\
  \end{array}
 \]
 \caption{$\bkor$}\label{fig:belnap-know-or}
\end{subfigure}

\centering
\begin{subfigure}[b]{.4\columnwidth}\centering
 \[
  \begin{array}{c|cccc}
     \supset_{\rm b} & 0 & \bot & \top & 1 \\
   \hline
    0 & 1 & 1 & 1 & 1 \\
    \bot & 1 & 1 & 1 & 1 \\
    \top & 0 & \bot & \top & 1 \\
    1 & 0 & \bot & \top & 1 \\
  \end{array}
 \]
 \caption{$\supset_{\rm b}$}\label{fig:belnap-supset}
\end{subfigure}
\hfill
\begin{subfigure}[b]{.4\columnwidth}\centering
 \[
  \begin{array}{c|r}
     d & ~\dnot d  \\
   \hline
    0 & 1  \\
    \bot & \bot \\
    \top & \top \\
     1 & 0 
  \end{array}
 \]
 \caption{$\neg$}\label{fig:belnap-negation}
\end{subfigure}
\caption{Operators in Belnap logic}\label{fig:belnap-operators}
\end{figure}

\section{Proof of Proposition~2}\label{app:canonical-suitability-proof}

The decision table in Figure~\ref{fig:bkor-encoding} establishes the equivalence of $x \bkor y$ and $-(x \bkand -y)$, which proves that \mbox{$L((\four,\kord),\set{-,\bkand})$} is a canonically suitable logic (Proposition~\ref{pro:canonically-suitable-operators-for-belnap}).

 \begin{figure}[ht]\centering
  \[
   \begin{array}{ r | r | r | r | r | r | r}                       
      d & ~d' & \bnegk d & \bnegk d'  & \bnegk d \bkand \bnegk d' & \bnegk (\bnegk d \bkand \bnegk d') & d \bkor d' \\
     \hline
		\bot & \bot & \top & \top & \top & \bot & \bot \\
		\bot & 0 & \top & 0 & 0 & 0 & 0 \\
		\bot & 1 & \top & 1 & 1 & 1 & 1 \\
		\bot & \top & \top & \bot & \bot & \top & \top \\
		0 & \bot & 0 & \top & 0 & 0 & 0 \\
		0 & 0 & 0 & 0 & 0 & 0 & 0 \\
		0 & 1 & 0 & 1 & \bot & \top & \top \\
		0 & \top & 0 & \bot & \bot & \top & \top \\
		1 & \bot & 1 & \top & 1 & 1 & 1 \\
		1 & 0 & 1 & 0 & \bot & \top & \top \\
		1 & 1 & 1 & 1 & 1 & 1 & 1 \\
		1 & \top & 1 & \bot & \bot & \top & \top \\
		\top & \bot & \bot & \top & \bot & \top & \top \\
		\top & 0 & \bot & 0 & \bot & \top & \top \\
		\top & 1 & \bot & 1 & \bot & \top & \top \\
		\top & \top & \bot & \bot & \bot & \top & \top \\
		
   \end{array} 
  \]
  
\caption{Encoding $\bkor$ using $\bnegk$ and $\bkand$}\label{fig:bkor-encoding}
\end{figure}

\vfill

\section{Encoding PT{a}CL Decisions and\\ Operators}
\label{app:A}

In Figures~\ref{fig:ptacl-decisions-in-xacml} and~\ref{fig:ptacl-ops-in-xacml} we illustrate how $\ptacll{\four}$ extensions could be incorporated in XACML by encoding the $\ptacll{\four}$ decisions and $\bkand$ operator using the syntax of the XACML standard.

\begin{figure}[h]
{\scriptsize
\begin{lstlisting}
<xs:element name=``Decision'' 
	    type=``xacml:DecisionType''/>
<xs:simpleType name=``DecisionType''>
 <xs:restriction base=``xs:string''>
   <xs:enumeration value=``Permit''/>
   <xs:enumeration value=``Deny''/>
   <xs:enumeration value=``Conflict''/>
   <xs:enumeration value=``NotApplicable''/>
 </xs:restriction>
</xs:simpleType>
\end{lstlisting}}
\caption{The $\ptacll{\four}$ decision set in XACML syntax}\label{fig:ptacl-decisions-in-xacml}
\end{figure}

\begin{figure}[h]
{\scriptsize
\begin{lstlisting}
Decision ptaclCombiningAlgorithm(Node[] children)
{
 Boolean atLeastOneDeny  = false;
 Boolean atLeastOnePermit  = false;
 for( i=0 ; i < lengthOf(children) ; i++ )
 {
   Decision decision = children[i].evaluate();
   if (decision = = NotApplicable)
   {  return NotApplicable;  }
   if (decision = = Permit)
   {
     atLeastOnePermit = true;
     continue;
   }
   if (decision = = Deny)
   {
     atLeastOneDeny = true;
     continue;
   }
   if (decision = = Conflict)
   {  continue;  }
 }
 if (atLeastOneDeny & & atLeastOnePermit)
 {  return NotApplicable;  }
 if (atLeastOneDeny)
 {  return Deny;  }
 if (atLeastOnePermit)
 {  return Permit;  }
 return Conflict;
}
\end{lstlisting}
}
\caption{The $\ptacll{\four}$ operator $\bkand$ encoded as an XACML combining algorithm}\label{fig:ptacl-ops-in-xacml}
\end{figure}  

\end{document}